\newcommand{\SET}[1]{\{#1\}}  
\newcommand{\aN}{^{(N)}}
\def\calU{\mathcal{U}}
\def\vt#1{#1}
\newcommand{\nats}{\mathrm{I\!N}}
\newcommand{\reals}{\mathrm{I\!R}}
\newcommand{\nnreals}{\reals_{\geq 0}}
\newcommand\R{\reals}
\newcommand{\pnats}{\nats_{> 0}}
\newtheorem{theorem}{Theorem}
\newtheorem{lemma}{Lemma}
\newcommand{\vr}[1]{\mathbf{#1}}
\newcommand{\n}[1]{m_{\mathit{#1}}}
\newcommand\MN{M^{(N)}}
\newcommand\esp[1]{{\mathchoice{\besp{#1}}{\sesp{#1}}{\sesp{#1}}{\sesp{#1}}}}
\newcommand\besp[1]{\mathbb{E}\left[#1\right]}
\newcommand\sesp[1]{\mathbb{E}[#1]}
\newcommand\floor[1]{\left\lfloor#1\right\rfloor}
\newcommand\Proba[1]{{\mathchoice{\bProba{#1}}{\sProba{#1}}{\sProba{#1}}{\sProba{#1}}}} 
\newcommand\bProba[1]{\mathbf{P}\left[#1\right]}
\newcommand\sProba[1]{\mathbf{P}[#1]}
\newcommand\norm[1]{{\mathchoice{\bnorm{#1}}{\snorm{#1}}{\snorm{#1}}{\snorm{#1}}}}
\newcommand\bnorm[1]{\left\|#1\right\|}
\newcommand\snorm[1]{\|#1\|}
\newcommand\p[1]{\left(#1\right)}
\newcommand\cov[1]{\mathrm{cov}\left(#1\right)}
\newcommand\var[1]{\mathrm{var}\left(#1\right)}
\newtheorem{coro}{Corollary}
\begin{document}


\title{A Refined Mean Field Approximation of Synchronous Discrete-Time
  Population Models}

\author{Nicolas Gast (Inria) \and Diego Latella (CNR-ISTI) \and Mieke
  Massink (CNR-ISTI)}

\thanks{This paper and the simulations it
    contains are fully reproducible~:
    {\footnotesize\url{https://github.com/ngast/RefinedMeanField_SynchronousPopulation}}.
  }

\maketitle

\begin{abstract}
  Mean field approximation is a popular method to study the behaviour
  of stochastic models composed of a large number of interacting
  objects. When the objects are asynchronous, the mean field
  approximation of a population model can be expressed as an ordinary
  differential equation. When the objects are (clock-) synchronous the
  mean field approximation is a discrete time dynamical system. We
  focus on the latter.

  We study the accuracy of mean field approximation when this
  approximation is a discrete-time dynamical system. We extend a
  result that was shown for the continuous time case and we prove that
  expected performance indicators estimated by mean field approximation
  are $O(1/N)$-accurate. We provide simple expressions to effectively
  compute the asymptotic error of mean field approximation, for finite
  time-horizon and steady-state, and we use this computed error to
  propose what we call a \emph{refined} mean field approximation. We
  show, by using a few numerical examples, that this technique
  improves the quality of approximation compared to the classical mean
  field approximation, especially for relatively small population sizes.
\end{abstract}




\section{Introduction}
\label{sect:introduction}

Stochastic models are often used to model and analyse the performance
of computer (and many other) systems. A particularly rich and popular
class of models is given by stochastic population models. These have
been used, for instance, to model biological systems
\cite{wilkinson2006}, epidemic spreading \cite{andersson2000} or
queuing networks \cite{vvedenskaya1996queueing}. These systems are
composed of a set of homogeneous objects interacting with one
another. These models have a high expressive power, but an exact
analysis of any such a model is often computationally prohibitive when
the number of objects of the system grows. This results in the need
for approximation techniques.

A popular technique is to use mean field approximation. The idea
behind mean field approximation is to replace the study of the
original stochastic system by the one of a, much simpler,
deterministic dynamical system.  The success of mean field
approximation can be explained by multiple factors : (a) it is fast --
many models can be solved in closed form
\cite{vvedenskaya1996queueing,mitzenmacher2001power,tsitsiklis2011power,minnebo2}
or easily solved numerically
\cite{massoulie1,gast2010mean,van2013mean} -- (b) it is proven to be
asymptotically optimal as the number of objects in the system goes to
infinity
\cite{kurtz70,Le+07,benaim2008class,gast2012markov,BHLM13};
and (c) it is often very accurate also for systems of moderate size,
composed of $N\approx100$ objects.

The mean field approximation of a given model is constructed by
considering the limit of the original stochastic model as the number
of objects $N$ goes to infinity. There can be two types of limits. The
first type arises when the dynamics of the objects are
asynchronous. In this case the mean field approximation is given by a
continuous time dynamical system (often a system of ordinary
differential equations) -- this is the most studied case \emph{e.g.}
\cite{kurtz70,benaim2008class,BHLM13}.  The second
type arises when the objects are synchronous. In this case
the mean field approximation is a discrete time dynamical system
\cite{Le+07,gastgaujalDEDS,tinnakornsrisuphap2003limit}. We focus on
the latter.

\paragraph*{Contributions}
Our main contribution is an extension to (synchronous) DTMC population models of the results 
proposed in~\cite{gast2017refined} for (asynchronous) CTMC population models, thus providing a new approximation technique that is significantly more accurate than classical mean field approximation, especially for relatively small systems.
Our results apply to the classical model of
\cite{Le+07,gastgaujalDEDS,latella2013fly}. We prove our result for
the transient and the steady state dynamics. Moreover, it retains an
interesting feature of mean field approximation by being
computationally non-intensive.

More precisely, if $\MN_i(t)$ denotes the proportion of objects in a
state $i$ at time $t$, then the classical result of \cite{Le+07}
states that, as $N$ grows large, if the vector $\MN(0)$ converges almost surely
to $m$, for some vector $m$, then the vector $\MN(t)$ converges almost
surely to a deterministic quantity $\mu(t)$ that satisfies a
recurrence equation of the form $\mu(t+1)=\mu(t)\vr{K}(\mu(t))$ with
$\mu(0)=m$. We show that, for any twice differentiable function $h$, there
exists a constant $V_{t,h}$ such that
\begin{align}
  \label{eq:main_result}
  \lim_{N\to\infty} N(\esp{h(\MN_t)} - h(\mu(t))) = V_{t,h}.
\end{align}
We provide an algorithm to compute the constant $V_{t,h}$ by a linear
dynamical system that involves the first and second derivative of the
functions $m\mapsto m\vr{K}(m)$ and $h$.  We also show that if the function
$m\mapsto m\vr{K}(m)$ has a unique fixed point $\mu(\infty)$ that is
globally exponentially stable, then the same result holds for the
steady-state : in this case, $V_{\infty,h}=\lim_{t\to\infty}V_{t,h}$
exists and can be expressed as the solution of a discrete-time
Lyapunov equation that involves the first and second derivative of
$m\mapsto m\vr{K}(m)$ and $h$ evaluated at the point $\mu(\infty)$.

By using these results, we define a quantity $h(\mu(t))+V_{t,h}/N$
that we call the \emph{refined mean field} approximation. As opposed
to the classical mean field approximation, this approximation depends
on the system size $N$. We illustrate our theoretical results with
four different examples. While these examples all show that our
refined model is clearly more accurate than the classical
approximations, they illustrate different characteristics. The first
two examples are cases where the dynamical system has a unique
exponentially stable attractor. In these examples, refined mean field
provides performance estimates that are extremely accurate (the
typical error between $M\aN$ and $\mu$ is less than $1\%$ for
$N=10$). The third example is different as it is a case when the
stochastic system has two absorbing states. In this case, the refined
mean field is still more accurate than the classical mean field
approximation but remains far from the exact values for $N=10$.  It is
only for larger values of $N$ that the refined mean field provides a
very accurate estimate. Finally, the fourth example is a case where
the mean field approximation has a unique attractor that is not
exponentially stable. We observe that in this case the refined
approximation provides an accurate approximation of $\esp{\MN(t)}$ for
small values of $t$ but fails to predict correctly what happens when
$t$ is large compared to $N$. In fact in this case, one cannot refine
the steady-state expectation by a term in $O(1/N)$ because the
convergence is only in $O(1/\sqrt{N})$ in this case.

This suggests that, when using a mean field or refined mean field
approximation, one has to be careful~: the approximations of a system
with more than one stable equilibrium or a unique but
non-exponentially stable equilibrium is likely to be inaccurate for
small values of $N$, even when one focuses on the transient behaviour.

\paragraph*{Related work} Our results extend the recent results of
\cite{gast2017refined}. The authors of \cite{gast2017refined} study
the steady-state of stochastic models that have a continuous-time mean
field approximation. They show that Equation~\eqref{eq:main_result} is
true in this case and provide a numerical algorithm to compute the
constant. Our paper has two theoretical contributions with respect to
\cite{gast2017refined} : First we show that the results also hold for
models that have a discrete-time mean field approximation, and second
we show how to derive these equations for the transient and steady
state regimes of such systems.  This means that our results remain in
the realm of discrete-time models whereas in~\cite{gast2017refined} it
is shown how some discrete-time models can be transformed into
density-dependent (continuous time) population models by replacing
time steps with steps that last for a random time that is
exponentially distributed with mean $1/N$, where $N$ is the population
size. The resulting continuous time model can then be analysed using
the approximation techniques for CTMC population models discussed
in~\cite{gast2017refined}.

The results of \cite{gast2017refined} and the one of the current paper
follow from a series of recent results concerning the rate of
convergence of stochastic models to their mean field approximation
\cite{gast2017expected,ying2016rate,ying2017stein,kolokoltsov2011mean}. The
key idea behind these works is to study the convergence of the
generator of the stochastic processes to the one of its mean field
approximation and to use this convergence rate to obtain a bound on
Equation~\eqref{eq:main_result}.  For the steady-state regime, this is
made possible by using Stein's methods
\cite{stein1986approximate,braverman2017stein,braverman2017stein2}.
Note that the approach taken in the current paper is fundamentally
different from the one that is usually used to obtain convergence
rates, like \cite{gast2012markov,bortolussi2013bounds,gastgaujalDEDS}
in which the authors focus on sample path convergence and obtain
bounds on the convergence of the expected distance
$\esp{\norm{\MN-\mu}}$ between the stochastic system and its mean
field approximation. When focusing on sample path convergence, the
refinement of the mean field approximation would be to consider an
additive term of $1/\sqrt{N}$ times a Gaussian noise as for example in
\cite{gast2012markov} and not a $1/N$ term as in this paper.

\paragraph*{Outline} The rest of the paper is organised as follows. In
Section~\ref{sect:preliminaries}, we introduce the model that we
study. In Section~\ref{sect:Refined} we provide the main results and
in particular Theorem~\ref{theo:main}. In Sections~\ref{sect:RefSEIR},
\ref{sect:RefWSN}, \ref{sect:RefVoting} and
\ref{sect:non-exponentially-stable}, we provide a few numerical
examples that demonstrate the accuracy of the refined mean field
approximation and its limits. Finally, we conclude in
Section~\ref{sec:conclusion}.

\section{Preliminaries}
\label{sect:preliminaries}

In this section we introduce some terminology and notation as well as
some preliminary definitions, setting the context for the rest of the
paper.

\subsection{Notations}

We let $\nats$ denote the set of natural numbers and $\nnreals^n$ the
set of $n$-tuples of non-negative real numbers; we conventionally see
any such an $n$-tuple $\vt{m}=(m_1,\ldots,m_n)$ as a row-vector,
i.e. a $1 \times n$ matrix. We let $\calU^n \subset \nnreals^n$ be the
unit simplex of $\nnreals^n$, that is
$\calU^n=\SET{\vt{m} \in [0,1]^n \;|\; m_1 + \ldots +m_n =1}$.

For function $f :\reals^n \rightarrow \reals^p$ continuous and twice
differentiable, with
$ f(\vt{m}) = (f_1(\vt{m}), \ldots, f_p(\vt{m})), $ we denote by $Df$
and $D^2f$ its first and second derivatives, respectively. $D f (\vt{m})$ is the
$p \times n$ (function) matrix such that
$(D f (\vt{m}))_{ij} = \frac{\partial f_i(\vt{m})}{\partial m_j}$.
$D^2 f (\vt{m})$ is the $p \times n \times n$ tensor such that
$(D^2 f (\vt{m}))_{ijk} = \frac{\partial^2 f_i(\vt{m})}{\partial
  m_j\partial m_k}$.

Moreover, for a $p \times n \times n$ tensor $P$ and a $n \times n$
matrix $Q$, we let $P \cdot Q$ be the (row) vector in $\reals^p$ such
that $(P \cdot Q)_i = \sum_{j,k=1}^n (P)_{ijk}(Q)_{jk}$, for
$i=1,\ldots, p$.  In addition, for $p \times n$ matrix $A$ and
$n \times m$ matrix $B$ we use the notation $AB$ for standard matrix
product: $(AB)_{ij} = \sum_{k=1}^n (A)_{ik}(B)_{kj}$; obviously this
includes the case of vector inner product $\vt{u}^T \vt{v}$ for
$\vt{u}$ and $\vt{v}$ being $n \times 1$ (column) vectors, where $A^T$
is the transpose of $A$, i.e. $(A^T)_{ij}=(A)_{ji}$; the vector outer
product $\vt{u} \otimes \vt{v}$, sometimes also denoted by
$\vt{u} \;\vt{v}^T$, is the matrix such that
$(\vt{u} \otimes \vt{v})_{i,j} = u_iv_j$.

Finally, for any vector $v$, we let $\norm{v}$ denote the norm of $v$
and $\norm{v}^2$ denote the square of $\norm{v}$. Note that the choice
of the specific norm is left unspecified because the results presented
in the present paper hold for any norm\footnote{Of course, technical
  details in the proofs may depend on the specific norm (see
  e.g. footnote~\ref{footnote:ftntlemma:o(||E||^2)} in the proof of
  Lemma~\ref{lemma:o(||E||^2)})}.  Note that in the proofs, what we
denote by $E, o(\norm{E}^2), \esp{o(\norm{E}^2)}$ are $p$-dimensional
vectors while their norm
$\norm{E}, \norm{E}^2, \norm{\esp{o(\norm{E}^2)}} \in \reals$ are real
non-negative numbers.

\subsection{Synchronous Mean Field Model}

We consider a system of $0<N \in \nats$ identical\footnote{It is worth
  pointing out here that the requirement that the $N$ 
  objects be identical 
  can be relaxed, since a system with {\em different} classes of {\em
    identical} objects can easily be modelled by considering an
  equivalent system with instances of an object whose set of states is
  the union of those of the original objects and similarly for the set
  of its transitions, as shown in the example in
  Section~\ref{sect:RefWSN}.} interacting objects; $N$ is called the
{\em size} of the system.  We assume that the number of local states
of each object is finite\footnote{In fact, the same theoretical
  results could be derived for infinite dimensional models with two
  additional assumptions to cope with the fact that the simplex
  $\calU^{\infty}$ is not compact : (i) imposing all functions to be
  \emph{uniformly} continuous and (ii) Imposing tightness
  assumptions.}, say $n$; for the sake of simplicity, in the sequel we
let the set of states of local objects be $\SET{1,\ldots, n}$, when
not specified otherwise.  Time is {\em discrete} and the behaviour of
the system is characterised by a (time homogeneous) discrete time
Markov chain (DTMC) $\vt{X}\aN(t)=(X_1\aN(t), \ldots, X_N\aN(t))$,
where $X_i\aN(t)$ is the state of object $i$ at time $t$, for
$i=1,\ldots, N$.

We define the occupancy measure at time $t$ as the row-vector
$\vt{M}\aN(t)=(M_1\aN(t), \ldots, M_n\aN(t))$ where, for
$j=1,\ldots, n$, $M_j\aN(t)$ is the {\em fraction} of objects in state
$j$ at time $t$, over the total population of $N$ objects:
\begin{align*}
  M_j\aN(t) = \frac{1}{N}\sum_{i=1}^N 1_{\SET{X_i\aN(t)=j}}
\end{align*}
where $1_{\SET{x=j}}$ is equal to $1$ if $x=j$ and $0$ otherwise.

At each time step $t \in \nats$ each object performs a local
transition, that may also be a self-loop to its current state.  The
transition probabilities of an object state depend on the current
local state of the object and may depend also on $M(t)$.

We let $\vr{K}(\vt{m})$ denote the one-step transition probability
$n \times n$ matrix of an object in the system: $\vr{K}_{ij}(\vt{m})$
is the probability for the object to jump from state $i$ to state $j$
in the system when the occupancy measure vector is $\vt{m}$. We assume
that, given the occupancy measure, the transitions made by two objects
are independent. Our model is identical to the one of \cite{Le+07} up
to the fact that the authors of \cite{Le+07} add a continuous resource
to the model and allow object transition matrix $\vr{K}$ to depend also on
the size $N$ of the system (in which case they assume
that the sequence of transition matrices $\vr{K}^N$ converges to a function
$\vr{K}$ as $N$ goes to infinity). To simplify the exposition, in this
paper we consider a case without resource and we assume
$\vr{K}(\vt{m})$ is a continuous function of $\vt{m}$ that does not
depend on $N$. The results presented in this paper could be extended
to the more general case where $\vr{K}(\vt{m})$ is also a function of
$N$ and could be modified to incorporate a resource, essentially by
replacing our equation for the variance $\Gamma$ by the one presented
in \cite[Equation~(7)]{gastgaujalDEDS}.

Below we recall Theorem 4.1 of~\cite{Le+07} on classical mean field
approximation, under the simplifying assumptions mentioned above:

\begin{quotation}
\noindent
{\bf Theorem 4.1 of \cite{Le+07} (Convergence to Mean Field)}
{\em Assume that the initial occupancy measure $\vt{M}\aN(0)$ converges almost surely to the deterministic limit $\vt{\mu}(0)$. 
Define $\vt{\mu}(t)$ iteratively by (for $t \geq 0$):
\begin{align}
  \label{eq:mu}
  \vt{\mu}(t+1) = \vt{\mu}(t) \, \vr{K}(\vt{\mu}(t)).
\end{align}
Then for any fixed time $t$, almost surely:
$$
\lim_{N \to \infty} \vt{M}\aN(t) = \vt{\mu}(t).
$$
}
\end{quotation}
In the sequel, we will write $\vt{M}(t)$ or simply $\vt{M}$ instead of
$\vt{M}\aN(t)$, leaving $N$ and $t$ implicit, when this does not cause
confusion.

\section{Refined Deterministic Approximation Theorem}
\label{sect:Refined}

In this section, we present our main results (Theorem~\ref{theo:main}
and Theorem~\ref{theo:steady}) and provide their proofs.

\subsection{First Main Result : Transient Behaviour}

The iterative procedure of Theorem 4.1 of~\cite{Le+07}  can be
formalised as a $t$-indexed family of functions $\Phi_t$ from
$\calU^n$ to $\calU^n$, where the functions
$\Phi_t: \calU^n \rightarrow \calU^n$ are defined as follows:
\begin{align*}
  \Phi_{0}(\vt{m}) = \vt{m}; \qquad {\displaystyle (\Phi_1(\vt{m}))_j =
  \sum_{i=1}^n m_i \vr{K}_{ij}(\vt{m})};
  \qquad \Phi_{t+1}(\vt{m}) = \Phi_1(\Phi_t(\vt{m}))
\end{align*}
This implies that for all $\vt{m} \in \calU^n$ and $t \in \nats$, we
have $\Phi_1(\Phi_t(\vt{m}))= \Phi_t(\Phi_1(\vt{m}))$.

In the following we assume that $\Phi_{1}$ is continuous and twice
differentiable with respect to $\vt{m}$ and that its second derivative
is continuous (note that as $\calU^n$ is compact, this implies that
$\Phi_1$ and its first two derivative are uniformly continuous).
Moreover, in what follows, we assume that $\vt{M}\aN(0)$ converges to
$\mu(0)$ (a deterministic value) as $N$ goes to infinity and we let
$\mu(t)$ be defined as in Equation~\eqref{eq:mu}, or, equivalently,
$\mu(t+1)=\Phi_1(\mu(t))=\Phi_{t+1}(\mu(0))$.

Our main theorem can be stated as follows. 
\begin{theorem}\label{theo:main}
  Assume that the function $\Phi_1$ is twice differentiable with
  continuous second derivative and that $M\aN(0)$ converges weakly to
  $\mu(0)$. Let $A_t$ and $B_t$ be respectively the $n \times n$
  matrix $A_t = (D \Phi_1)(\mu(t))$ and the $n \times n \times n$
  tensor $B_t = (D^2 \Phi_1)(\mu(t))$.  Then for any continuous and
  twice differentiable function with continuous second derivative
  $h:\calU^n \rightarrow \nnreals^p$ we have:
$$
\lim_{N\rightarrow \infty} N\esp{h(\vt{M}\aN(t))- h(\Phi_t(\vt{M}\aN(0)))} =
Dh(\mu(t)) V_t + \frac{1}{2}D^2h(\mu(t))\cdot W_t,
$$
where $V_t$ is an $n \times 1$ vector and $W_t$ is an $n \times n$ matrix, defined as follows:
$$
\begin{array}{lcl}
V_{t+1} & = & A_tV_t + \frac{1}{2}B_t \cdot W_t\\\\
W_{t+1} & = & \Gamma(\mu(t)) + A_t W_t A_t^T,
\end{array}
$$
with $V_0=0$, $W_0 = 0$ and $\Gamma(\vt{m})$ is the following
$n \times n$ matrix:
$$
\begin{array}{lcl}
\Gamma_{jj}(\vt{m}) & = &  \sum_{i=1}^n m_i \vr{K}_{ij}(\vt{m})(1-\vr{K}_{ij}(\vt{m}))\\\\
\Gamma_{jk}(\vt{m}) & = & -\sum_{i=1}^n m_i \vr{K}_{ij}(\vt{m})\vr{K}_{ik}(\vt{m})
\end{array}
$$

\end{theorem}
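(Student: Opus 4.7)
The natural approach is induction on $t$, tracking the scaled first and second moments of the deviation process. Let $\tilde{E}_t = \vt{M}\aN(t) - \Phi_t(\vt{M}\aN(0))$, so that $\tilde{E}_0 = 0$ almost surely. I will establish by induction that $N\esp{\tilde{E}_t}\to V_t$ and $N\esp{\tilde{E}_t\otimes\tilde{E}_t}\to W_t$ as $N\to\infty$, and then derive the stated limit from a second-order Taylor expansion of $h$ around $\Phi_t(\vt{M}\aN(0))$, using that $\Phi_t(\vt{M}\aN(0))\to\mu(t)$ (in probability, hence in $L^p$ by boundedness on $\calU^n$) and the continuity of $Dh$ and $D^2h$.

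For the inductive step, decompose the one-step update as $\vt{M}\aN(t+1) = \Phi_1(\vt{M}\aN(t)) + \eta_{t+1}$, where the conditional mean-zero noise $\eta_{t+1}$ satisfies $\esp{\eta_{t+1}\otimes\eta_{t+1}\mid\vt{M}\aN(t)} = \Gamma(\vt{M}\aN(t))/N$. The formula for $\Gamma$ arises because, conditional on $\vt{M}\aN(t)=\vt{m}$, the transitions of the $N$ objects are independent with row distributions $\vr{K}_{i,\cdot}(\vt{m})$, so that $\vt{M}\aN(t+1)$ has the covariance structure of a normalised sum of independent multinomial draws. Since $\Phi_{t+1}(\vt{M}\aN(0))=\Phi_1(\Phi_t(\vt{M}\aN(0)))$, a second-order Taylor expansion of $\Phi_1$ about $\Phi_t(\vt{M}\aN(0))$ yields
$$
\tilde{E}_{t+1} = \eta_{t+1} + A_t^N\tilde{E}_t + \tfrac{1}{2}B_t^N\cdot(\tilde{E}_t\otimes\tilde{E}_t) + R_t,
$$
where $A_t^N=D\Phi_1(\Phi_t(\vt{M}\aN(0)))\to A_t$ and $B_t^N=D^2\Phi_1(\Phi_t(\vt{M}\aN(0)))\to B_t$ by continuity, and $R_t$ is a Taylor remainder of order $o(\|\tilde E_t\|^2)$. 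Taking expectations, the $\eta_{t+1}$ term drops out and the recursion $V_{t+1}=A_tV_t+\tfrac12 B_t\cdot W_t$ follows. For the second moment, the cross terms of the form $\eta_{t+1}\otimes[\,\cdot\,]$ vanish in expectation by the tower property (since $\eta_{t+1}$ has conditional mean zero and all other factors are $\vt{M}\aN(t)$-measurable), leaving
$$
N\esp{\tilde E_{t+1}\otimes\tilde E_{t+1}} = \esp{\Gamma(\vt{M}\aN(t))} + N\, A_t^N \esp{\tilde E_t\otimes\tilde E_t}(A_t^N)^T + o(1).
$$
Bounded convergence on the compact simplex $\calU^n$ then gives $W_{t+1}=\Gamma(\mu(t))+A_tW_tA_t^T$.

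The main technical obstacle is the control of the $o(\|\tilde E_t\|^2)$ remainders after scaling by $N$; concretely, one needs $N\esp{\|R_t\|}=o(1)$, which reduces to uniform moment bounds of the form $\esp{\|\tilde E_t\|^k}=O(N^{-k/2})$ for $k=2,3$. These I would prove inductively by writing $\tilde E_t$ as an accumulated sum of martingale increments of size $O(1/N)$, each bounded in absolute value, and applying standard concentration or variance estimates (e.g., Azuma--Hoeffding) together with the uniform Lipschitz property of $\Phi_1$ on the compact $\calU^n$. Because $D^2\Phi_1$ is uniformly continuous on $\calU^n$, Taylor's theorem gives a uniform bound $\|R_t\|\leq \omega(\|\tilde E_t\|)\|\tilde E_t\|^2$ with $\omega(r)\to 0$ as $r\to 0$; combined with the third-moment bound on $\|\tilde E_t\|$ this yields $N\esp{\|R_t\|}\to 0$.

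Finally, given the convergences of $N\esp{\tilde E_t}$ and $N\esp{\tilde E_t\otimes\tilde E_t}$, the theorem follows from the expansion
$$
h(\vt{M}\aN(t)) - h(\Phi_t(\vt{M}\aN(0))) = Dh(\Phi_t(\vt{M}\aN(0)))\tilde E_t + \tfrac12 D^2 h(\Phi_t(\vt{M}\aN(0)))\cdot(\tilde E_t\otimes\tilde E_t) + o(\|\tilde E_t\|^2),
$$
by taking expectations, multiplying by $N$, and letting $N\to\infty$, using once more that $\Phi_t(\vt{M}\aN(0))\to\mu(t)$ and the continuity of $Dh$ and $D^2h$.
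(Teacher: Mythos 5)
Your proposal is correct in outline, but it is organized quite differently from the paper's proof, so a comparison is worthwhile. You prove the moment expansions $N\esp{\tilde E_t}\to V_t$ and $N\esp{\tilde E_t\otimes\tilde E_t}\to W_t$ for the \emph{accumulated} error $\tilde E_t=\vt{M}\aN(t)-\Phi_t(\vt{M}\aN(0))$ by induction, and then deduce the general statement for $h$ by one final Taylor expansion; in effect you prove Corollary~\ref{coro:main} first and derive Theorem~\ref{theo:main} from it. The paper goes the other way: it runs the induction directly on the functional quantity $N\esp{h(\vt{M}(t))-h(\mu(t))}$, splitting the one-step increment into $N\esp{h(\vt{M}(t+1))-h(\Phi_1(\vt{M}(t)))}$ (handled by a one-step Taylor lemma producing $\Gamma$) and $N\esp{h(\Phi_1(\vt{M}(t)))-h(\Phi_1(\mu(t)))}$ (handled by applying the induction hypothesis to the composed function $h\circ\Phi_1$, with the chain and product rules recovering the $A_tV_t$, $B_t\cdot W_t$ and $A_tW_tA_t^T$ terms). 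The practical difference is where the technical burden lands. In the paper's scheme the only remainder that must be controlled is the \emph{one-step} error $E=\vt{M}(t+1)-\Phi_1(\vt{M}(t))$, which conditionally is a normalized sum of independent Bernoulli variables, so Hoeffding applies directly and no moment bounds on the accumulated error are ever needed; the price is the derivative bookkeeping for $h\circ\Phi_1$ and the need for uniformity in the conditioning state (the $\varepsilon(N)$ functions). Your scheme is probabilistically more transparent (it exhibits $V_t$ and $W_t$ as the asymptotic mean and covariance of the error process), but it requires the bounds $\esp{\|\tilde E_t\|^k}=O(N^{-k/2})$, $k=2,3$, for the accumulated error, which you correctly identify as the main obstacle. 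One small looseness there: $\tilde E_t$ is not literally a sum of martingale increments, because the drift $\Phi_1(\vt{M}(t))-\Phi_1(\Phi_t(\vt{M}(0)))$ propagates the past error nonlinearly; the bounds instead follow from the recursive inequality $\|\tilde E_{t+1}\|\le\|\eta_{t+1}\|+L\|\tilde E_t\|+C\|\tilde E_t\|^2$ (with $L$ a Lipschitz bound for $\Phi_1$ on the compact simplex) combined with Azuma--Hoeffding for the increments $\eta_s$ --- which is what your parenthetical suggests, so this is a presentational gap rather than a mathematical one. Both routes are valid for fixed $t$; note that your cross-term cancellations (conditional mean-zero of $\eta_{t+1}$ against $\vt{M}(t)$-measurable factors) and the bounded-convergence step for $\esp{\Gamma(\vt{M}(t))}\to\Gamma(\mu(t))$ correspond exactly to the paper's Lemmas on $\esp{E}$, $\esp{E\otimes E}$ and on $\esp{g(\vt{M}(t))}\to g(\mu(t))$.
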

The key idea of the proof is to use a Taylor expansion of
$\esp{h(\Phi_1(m))}$ around $\Phi_1(m)$.  We postpone the proof to
Section~\ref{ssec:proof_transient}.

One of the main consequences of Theorem~\ref{theo:main} is that it
allows us to compute precisely a development in $O(1/N)$ of the mean
and the covariance of the vector $\MN(t)$. This first order
development is what we call the \emph{refined mean field
  approximation}. In our numerical simulations, we will show that this
refined approximation can greatly improve the accuracy of the original
mean field approximation when the number of entities $N$ is relatively small.
\begin{coro}
  \label{coro:main}
  Let $t\in\nats$. Under the assumptions of Theorem~\ref{theo:main},
  and denoting $\mu(t)=\Phi_t(m)$, it holds that
  \begin{itemize}
  \item[(i)] For any coordinate $i$ and any time-step
    \begin{align*}
      \esp{\MN_i(t)} = \mu_i(t) + \frac{(V_{t})_i}{N} +  o\p{\frac1N}.
    \end{align*}
  \item[(ii)] For any pair of coordinates $i,j$, the co-variance satisfies
    \begin{align*}
      \cov{\MN_i(t),\MN_j(t)} = \frac1N(W_{t})_{i,j} + o\p{\frac1N}. 
    \end{align*}
  \end{itemize}
\end{coro}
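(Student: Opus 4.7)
\medskip
\noindent\textbf{Proof plan.} The corollary is a direct consequence of Theorem~\ref{theo:main} applied to well-chosen test functions $h$, combined with an application of the theorem a second time to handle the product term in the covariance. Throughout the plan I assume, as the statement suggests, that $\MN(0)$ is the deterministic value $m$, so that $\Phi_t(\MN(0)) = \mu(t)$ almost surely and $h(\Phi_t(\MN(0))) = h(\mu(t))$.

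For part (i), the plan is to pick the coordinate projection $h(\vt{m}) = m_i$, which is linear (hence certainly twice continuously differentiable). Then $Dh(\mu(t))$ is the $i$-th unit row vector and $D^2h\equiv 0$. Substituting into Theorem~\ref{theo:main} gives
\[
\lim_{N\to\infty} N\p{\esp{\MN_i(t)} - \mu_i(t)} \;=\; Dh(\mu(t))\,V_t + \tfrac12 D^2 h(\mu(t))\cdot W_t \;=\; (V_t)_i,
\]
which is exactly the asymptotic expansion claimed in (i).

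For part (ii), the idea is to apply the theorem to the quadratic test function $h(\vt{m})=m_im_j$ and then relate the resulting expansion of $\esp{\MN_i(t)\MN_j(t)}$ to the covariance via the product of expectations from (i). A short computation of derivatives gives $(Dh(\vt{m}))_p = m_j\delta_{ip}+m_i\delta_{jp}$ and $(D^2h)_{pq} = \delta_{ip}\delta_{jq}+\delta_{jp}\delta_{iq}$, so that $Dh(\mu(t))V_t = \mu_j(t)(V_t)_i + \mu_i(t)(V_t)_j$ and, using symmetry of $W_t$, $D^2h(\mu(t))\cdot W_t = 2(W_t)_{ij}$. Theorem~\ref{theo:main} then yields
\[
\esp{\MN_i(t)\MN_j(t)} \;=\; \mu_i(t)\mu_j(t) + \frac{1}{N}\Bigl(\mu_j(t)(V_t)_i + \mu_i(t)(V_t)_j + (W_t)_{ij}\Bigr) + o\p{\frac1N}.
\]
Meanwhile, multiplying the two expansions from (i) gives
\[
\esp{\MN_i(t)}\esp{\MN_j(t)} \;=\; \mu_i(t)\mu_j(t) + \frac{1}{N}\Bigl(\mu_i(t)(V_t)_j + \mu_j(t)(V_t)_i\Bigr) + o\p{\frac1N},
\]
since all terms of order $1/N^2$ are absorbed into $o(1/N)$. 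Subtracting the two identities kills the $\mu\mu$ and $\mu V$ contributions and leaves $\cov{\MN_i(t),\MN_j(t)} = (W_t)_{ij}/N + o(1/N)$, as desired.

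There is essentially no real obstacle here: once Theorem~\ref{theo:main} is in hand the work reduces to choosing the right $h$ and bookkeeping the first- and second-order derivatives. The only minor point that needs care is to notice that the coordinate projection and the coordinate product both automatically satisfy the twice-continuously-differentiable hypothesis of the theorem, and that the two $O(1/N^2)$ remainders in the product of expectations can be safely absorbed into $o(1/N)$.
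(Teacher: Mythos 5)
Your proof is correct and is exactly the intended derivation: the paper states Corollary~\ref{coro:main} without proof as an immediate consequence of Theorem~\ref{theo:main}, and instantiating $h(\vt{m})=m_i$ and $h(\vt{m})=m_im_j$ with the bookkeeping you describe (including the symmetry of $W_t$, which holds by induction since $\Gamma$ is symmetric and $A_tW_tA_t^T$ preserves symmetry) is precisely what is left implicit. No gaps.
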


\subsection{Second Main Result : Steady-State}
\label{ssec:steady}

Mean field approximation can also be used to characterise the
steady-state behaviour of a population model. It has been shown that,
in the case of continuous time or discrete-time mean field
approximation, if this approximation has a unique attractor, then the
stationary distribution of the system of size $N$ concentrates on this
attractor. In this section, we refine this result by computing the
rate of convergence and by defining the refined approximation in this
case.

We say that a point $\mu(\infty)$ is an exponentially stable attractor
if
\begin{itemize}
\item For any $m\in\calU^n$ : $\lim_{t\to\infty}\Phi_t(m)=\mu(\infty)$
  (\emph{i.e.} it is a global attractor).
\item There exists an open neighbourhood $V$ of $\mu(\infty)$ and two
  constants $a,b$ such that all $m\in V$ :
  $\norm{\Phi_t(m)-\mu(\infty)}\le a e^{-bt}\norm{m-\mu(\infty)}$
  (\emph{i.e.}  it is exponentially stable).
\end{itemize}

\begin{theorem}\label{theo:steady}
  Assume that $\MN$ has a unique stationary distribution (for each
  $N$), that the function $\Phi_1$ is twice differentiable and that
  the flow has a unique exponentially stable attractor
  $\mu(\infty)$. Then there exists a $n\times1$ vector $V_{\infty}$ and
  a $n\times n$ matrix $W_{\infty}$ such that the constants $V_t$ and
  $W_t$ defined in Theorem~\ref{theo:main} satisfy:
  \begin{align*}
    \lim_{t\to\infty}V_t=V_{\infty} \qquad \mathrm{and}\qquad
    \lim_{t\to\infty}W_t=W_{\infty}
  \end{align*}
  Moreover
  \begin{itemize}
  \item[(i)] $W_\infty$ is the unique solution of the discrete-time
    Lyapunov equation:
    \begin{align*}
      A_\infty WA_\infty^T - W + \Gamma(\mu(\infty)) = 0
    \end{align*}
    and $V_{\infty}$ is uniquely determined by
    \begin{align*}
      V_{\infty}&=\frac12(I-A_\infty)^{-1}B_\infty W_\infty,
    \end{align*}
    where $A_\infty=D\Phi_1(\mu(\infty))$,
    $B_\infty=D^2\Phi_1(\mu(\infty))$ and $I$ is the identity matrix.
  \item[(ii)] for any twice differentiable function $h$, we can exchange the
    limits :
    \begin{align*}
      \lim_{N\to\infty}\lim_{t\to\infty}
      &N\big(\mathbb{E}[h(\vt{M}(t))]- h(\Phi_t(M\aN(0)))\big)\\
      &=\lim_{t\to\infty}\lim_{N\rightarrow \infty}
        N\big(\mathbb{E}[h(\vt{M}(t))]- h(\Phi_t(\vt{M}\aN(0)))\big)\\
      &= Dh(\mu(\infty)) V_\infty + \frac{1}{2}D^2h(\mu(\infty))\cdot W_\infty,
    \end{align*}
  \end{itemize}

\end{theorem}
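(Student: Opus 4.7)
My approach is to handle the two parts of the theorem separately: first the convergence and characterisation of $V_t, W_t$ in part~(i), then the interchange of limits in part~(ii).

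For part~(i), I would begin by showing that exponential stability of $\mu(\infty)$ forces the spectral radius $\rho(A_\infty)$ to be strictly less than $1$. By the chain rule $D\Phi_t(\mu(\infty)) = A_\infty^t$, and the local exponential contraction bound $\norm{\Phi_t(m) - \mu(\infty)} \le a e^{-bt} \norm{m - \mu(\infty)}$ near $\mu(\infty)$ forces $\norm{A_\infty^t}$ to decay exponentially, so $\rho(A_\infty) \le e^{-b} < 1$. Standard Lyapunov theory then gives that the linear operator $X \mapsto A_\infty X A_\infty^T$ is a contraction on $n \times n$ matrices (its spectral radius equals $\rho(A_\infty)^2$), so the equation $A_\infty W A_\infty^T - W + \Gamma(\mu(\infty)) = 0$ admits the unique solution $W_\infty = \sum_{k=0}^\infty A_\infty^k \Gamma(\mu(\infty)) (A_\infty^T)^k$; similarly $I - A_\infty$ is invertible, which makes the formula for $V_\infty$ well-posed. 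For the convergence $W_t \to W_\infty$, subtracting the Lyapunov equation from the recurrence gives
\begin{equation*}
W_{t+1} - W_\infty = A_t (W_t - W_\infty) A_t^T + R_t,
\end{equation*}
where $R_t := (A_t - A_\infty) W_\infty A_t^T + A_\infty W_\infty (A_t - A_\infty)^T + (\Gamma(\mu(t)) - \Gamma(\mu(\infty)))$ vanishes as $t \to \infty$ because $\mu(t) \to \mu(\infty)$ and all functions involved are continuous. Since the driving operator is, for large $t$, an arbitrarily small perturbation of the strict contraction $X \mapsto A_\infty X A_\infty^T$, a Gr\"onwall-type estimate yields $W_t \to W_\infty$. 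The same reasoning applied to the affine recurrence for $V_t$ (with forcing $\tfrac12 B_t \cdot W_t \to \tfrac12 B_\infty \cdot W_\infty$) gives $V_t \to V_\infty$, where $V_\infty$ is determined by the fixed-point equation $(I - A_\infty) V_\infty = \tfrac12 B_\infty \cdot W_\infty$.

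For part~(ii), one direction is almost immediate: applying Theorem~\ref{theo:main} at each fixed $t$ yields $\lim_N N(\mathbb{E}[h(\MN(t))] - h(\Phi_t(\MN(0)))) = Dh(\mu(t)) V_t + \tfrac12 D^2 h(\mu(t)) \cdot W_t$, and then taking $t \to \infty$ using the convergences from part~(i) together with continuity of $Dh$ and $D^2 h$ produces the claimed value. The delicate direction is $\lim_N \lim_t$. Here I would use the hypothesis that $\MN$ admits a unique stationary distribution $\pi\aN$, so $\mathbb{E}[h(\MN(t))] \to \mathbb{E}_{\pi\aN}[h]$, together with the global-attractor property combined with compactness of $\calU^n$ (which upgrades pointwise to uniform convergence $\Phi_t \to \mu(\infty)$), so that $h(\Phi_t(\MN(0))) \to h(\mu(\infty))$. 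The inner limit thus reduces to $N(\mathbb{E}_{\pi\aN}[h] - h(\mu(\infty)))$, and the task becomes to show that this converges to the same value as the other iteration order.

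The pivotal step is a uniform-in-$t$ sharpening of Theorem~\ref{theo:main}: one must establish
\begin{equation*}
\sup_{t \ge 0} \Bigl| N \bigl(\mathbb{E}[h(\MN(t)) - h(\Phi_t(\MN(0)))]\bigr) - \bigl(Dh(\mu(t)) V_t + \tfrac12 D^2 h(\mu(t)) \cdot W_t\bigr) \Bigr| \longrightarrow 0 \quad \text{as } N \to \infty.
\end{equation*}
Once this uniformity is in hand the two iterated limits coincide and both equal $Dh(\mu(\infty)) V_\infty + \tfrac12 D^2 h(\mu(\infty)) \cdot W_\infty$. The main obstacle is precisely establishing this bound: it requires re-examining the proof of Theorem~\ref{theo:main} and tracking how the per-step Taylor remainders accumulate over $t$ iterations. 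Exponential stability is the decisive ingredient, since errors introduced at step $s$ are then transported to step $t$ via a product of Jacobians $A_s A_{s+1} \cdots A_{t-1}$ that contracts geometrically at rate $\rho(A_\infty) < 1$, rendering the accumulated error summable and therefore bounded uniformly in $t$.
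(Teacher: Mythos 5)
Your treatment of part~(i) is sound and in fact more explicit than the paper's: deducing $\rho(A_\infty)<1$ from exponential stability, solving the Lyapunov equation by the series $\sum_k A_\infty^k\Gamma(\mu(\infty))(A_\infty^T)^k$, and passing to the limit in the recurrences for $V_t,W_t$ by a perturbation argument is a legitimate route to the existence and uniqueness of $V_\infty,W_\infty$. The problem is part~(ii). You reduce the hard direction $\lim_N\lim_t$ to a uniform-in-$t$ sharpening of Theorem~\ref{theo:main}, and then you yourself identify establishing that supremum bound as ``the main obstacle'' without proving it. That is precisely where the content of the theorem lies, so as written the argument has a genuine gap. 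The difficulty is not cosmetic: in the inductive proof of Theorem~\ref{theo:main} the error at horizon $t$ is controlled by functions $\varepsilon_{t,g}(N)$ that are built up step by step and are not uniform in $t$; to make your geometric-damping heuristic rigorous you would need to show that the per-step Taylor remainders (each of size $\varepsilon(N)/N$ uniformly in the state, by Lemma~\ref{lemma:o(||E||^2)}) are propagated forward along the \emph{stochastic} trajectory with uniformly summable weights, which requires contraction estimates on products $A_s\cdots A_{t-1}$ evaluated along random states, not just at $\mu(\infty)$. None of this is carried out, and it is a substantial piece of work.

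The paper avoids the uniform-in-$t$ estimate entirely by a Stein's-method argument. It defines $G_h(m)=\sum_{t=0}^\infty\bigl[h(\Phi_t(m))-h(\mu(\infty))\bigr]$ (convergent by exponential stability), observes the Poisson equation $h(m)-h(\mu(\infty))=G_h(m)-G_h(\Phi_1(m))$, and starts the chain in its stationary distribution so that $\esp{G_h(M(0))}=\esp{G_h(M(1))}$. The steady-state bias then collapses to $\esp{G_h(M(1))-G_h(\Phi_1(M(0)))}$, which is a \emph{single} application of the one-step Lemma~\ref{lemma:Lemma1} to the function $G_h$; combined with the weak convergence of the stationary distributions to $\dirac_{\mu(\infty)}$ this gives $\frac{1}{2N}D^2(G_h)(\mu(\infty))\cdot\Gamma(\mu(\infty))+o(1/N)$, and $D^2(G_h)(\mu(\infty))=\sum_{t}D^2(h\circ\Phi_t)(\mu(\infty))$ is exactly what produces $V_\infty$ and $W_\infty$. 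If you want to complete your proof without switching to that route, you must actually prove the uniform bound you postulate; otherwise I would recommend restructuring part~(ii) around the Poisson equation, where stationarity does the work that your unproven uniformity was meant to do.
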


This result is interesting for at least two reasons. First, it
generalises Theorem~\ref{theo:main} to the case of stationary
distribution. Second, it is also the first result that provides a rate
of convergence for the steady-state distribution of a model that has a
discrete-time mean field approximation (to the best of our knowledge,
the rate of convergence had only been obtained for the finite-time horizon
in \cite{gastgaujalDEDS}). We postpone the proof to
Section~\ref{ssec:proof_steady}.

\subsection{Proofs}

To ease notation, in all the proofs we denote $M^N(0)$ by $m$, unless
specified otherwise.  In particular, when we write
$\mathbb{E}[h(\vt{M}(t))]$ we formally mean
$\mathbb{E}[h(\vt{M}\aN(t))| \vt{M}\aN(0) = \vt{m}]$.

\subsubsection{Proof of Theorem~\ref{theo:main}}
\label{ssec:proof_transient}

One of the key ingredients to prove our result is to study what happens
for $t=1$. This is what we do in Lemma~\ref{lemma:Lemma1}. Then
Theorem~\ref{theo:main} will follow by using an induction on $t$. Note
that one of the main technicalities in all these lemmas is to prove
that the convergence is uniform in the initial condition. This is why
we make use of various functions $\varepsilon(N)$ or
$\varepsilon_{t,g}(N)$ that control the \emph{uniform} convergence to
$0$.

\begin{lemma}\label{lemma:Lemma1}
  Let $h:\calU^n \rightarrow \nnreals^p$ (for $p \geq 1$), be a twice
  differentiable function such that $D^2h$ is continuous. Then, there
  exists a function $\varepsilon(N)$ such that
  $\lim_{N\to\infty}\varepsilon(N)=0$ and for all
  $M\aN(0)=\vt{m}\in\calU^n$, the following holds:
$$
\Big|\Big| N\big(\mathbb{E}[h(\vt{M}(1))] - h(\Phi_1(\vt{m}))\big) -
\frac{1}{2}(D^2h)(\Phi_1(\vt{m}))\cdot \Gamma(\vt{m})\Big|\Big| \le \varepsilon(N).
$$
where $\Gamma(\vt{m})$ is defined in Theorem~\ref{theo:main}.
\end{lemma}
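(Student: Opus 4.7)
The plan is to perform a second-order Taylor expansion of $h$ around $\Phi_1(\vt{m})$ evaluated at the random point $\vt{M}(1)$. Writing $E = \vt{M}(1) - \Phi_1(\vt{m})$, the expansion reads
$$h(\vt{M}(1)) = h(\Phi_1(\vt{m})) + Dh(\Phi_1(\vt{m}))\,E + \tfrac{1}{2} D^2h(\Phi_1(\vt{m})) \cdot (E\otimes E) + R(\vt{m},E),$$
so after taking expectation the lemma reduces to computing the first two moments of $E$ and controlling $N\,\esp{R(\vt{m},E)}$ uniformly in $\vt{m}$.

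For the moments, I would exploit the multinomial structure of the one-step transitions. Conditional on $\vt{M}(0)=\vt{m}$, the $Nm_i$ objects initially in state $i$ perform independent transitions, each drawn from the distribution $\vr{K}_{i\cdot}(\vt{m})$. This immediately gives $\esp{\vt{M}(1)} = \Phi_1(\vt{m})$ (hence $\esp{E}=0$) and, by summing the single-trial multinomial covariance $\mathrm{Cov}(Z_{i,k,j},Z_{i,k,l}) = \vr{K}_{ij}(\vt{m})(\delta_{jl}-\vr{K}_{il}(\vt{m}))$ over the $Nm_i$ independent objects in state $i$ and over $i$, one obtains $\esp{E\otimes E} = \tfrac{1}{N}\Gamma(\vt{m})$ with $\Gamma$ as in the statement. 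Plugging these in yields
$$N\bigl(\esp{h(\vt{M}(1))} - h(\Phi_1(\vt{m}))\bigr) = \tfrac12 D^2h(\Phi_1(\vt{m})) \cdot \Gamma(\vt{m}) + N\,\esp{R(\vt{m},E)}.$$

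The main obstacle is therefore showing that $N\,\esp{R(\vt{m},E)}$ tends to $0$ \emph{uniformly} in $\vt{m}\in\calU^n$. Since $D^2h$ is continuous on the compact simplex, it is uniformly continuous there, so Taylor's theorem with a mean-value remainder gives $\norm{R(\vt{m},E)} \le \tfrac12\,\omega(\norm{E})\,\norm{E}^2$, where $\omega$ is the modulus of continuity of $D^2h$ on $\calU^n$ (and $\omega(\delta)\to 0$ as $\delta\to 0$). The plan is to split the expectation at a threshold $\delta_N\to 0$:
$$\esp{\omega(\norm{E})\norm{E}^2} \le \omega(\delta_N)\,\esp{\norm{E}^2} + C\,\Proba{\norm{E}>\delta_N},$$
where the second term uses that $\norm{E}$ is deterministically bounded on $\calU^n$. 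The variance computation gives $\esp{\norm{E}^2} = O(1/N)$, and Hoeffding's inequality applied to the independent Bernoulli variables $Z_{i,k,j}$ yields $\Proba{\norm{E}>\delta_N} \le C'\exp(-cN\delta_N^2)$, both bounds holding uniformly in $\vt{m}$. Choosing for instance $\delta_N = N^{-1/3}$ makes $N\omega(\delta_N)\esp{\norm{E}^2} = O(\omega(N^{-1/3}))\to 0$ and $N\exp(-cN^{1/3})\to 0$, which provides the desired uniform bound $\varepsilon(N)$ and concludes the proof.
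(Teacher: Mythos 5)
Your proof is correct and follows essentially the same route as the paper's: a second-order Taylor expansion of $h$ around $\Phi_1(\vt{m})$, the multinomial moment computation giving $\esp{E}=0$ and $\esp{E\otimes E}=\Gamma(\vt{m})/N$, and a uniform control of the remainder via the uniform continuity of $D^2h$ on the compact simplex, the bound $\esp{\norm{E}^2}\le 1/N$, and Hoeffding's inequality. The only (immaterial) difference is that you fix an explicit threshold $\delta_N=N^{-1/3}$ for the splitting, whereas the paper keeps the level $a_\epsilon$ free and takes an infimum over $\epsilon$ at the end.
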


\begin{proof}
  The key idea of this proof is to consider the Taylor expansion of
  $h$ for $\vt{M}(1)$ in the neighbourhood of
  $\Phi_1(\vt{m})$. Let $E=M(1)-\Phi_1(m)$. We get the following :
  \begin{align*}
    &h(\vt{M}(1))- h(\Phi_1(\vt{m}))
      = (Dh)(\Phi_1(\vt{m})) \vt{E} + \frac{1}{2}(D^2h)(\Phi_1(\vt{m})) \cdot (\vt{E} \otimes \vt{E}) + o(||\vt{E}||^2)
  \end{align*}
  Taking the expectation on both sides, we get :
  \begin{align*}
    &\mathbb{E}[h(\vt{M}(1))]- h(\Phi_1(\vt{m}))\\ & = 
      (Dh)(\Phi_1(\vt{m}))  \mathbb{E}[\vt{E}] + \frac{1}{2}(D^2h)(\Phi_1(\vt{m})) \cdot \mathbb{E}[(\vt{E}\otimes \vt{E})] 
      + \mathbb{E}[o(||\vt{E}||^2)]
  \end{align*}
  The result follows by using Lemma~\ref{lemma:expectE}, that
  establishes that $\mathbb{E}[\vt{E}]=0$ and
  $N\mathbb{E}[(\vt{E}\otimes \vt{E})] = \Gamma(\vt{m})$ and
  Lemma~\ref{lemma:o(||E||^2)} that shows
  $||N\mathbb{E}[o(||\vt{E}||^2)]|| \le\varepsilon(N)$.
\end{proof}

The following lemma is a direct generalisation of
Lemma~\ref{lemma:Lemma1} and it will be used in the proof of
Theorem~\ref{theo:main}. The proof is essentially the same as that of
Lemma~\ref{lemma:Lemma1} and exploits the time-homogeneity of the
Markov chain. 

\begin{lemma}\label{lemma:GenLemma1}
  Let $h:\calU^n \rightarrow \nnreals^p$ be a twice differentiable
  function whose second derivative is continuous. Then there
  exists a function $\varepsilon(N)$ such that
  $\lim_{N\to\infty}\varepsilon(N)=0$ and for all $t \in \nats$,
  $N \in \pnats$ and $M\aN(t)=\vt{m}' \in \calU^n$, the following
  holds:
$$
\Big|\Big| N\big(\mathbb{E}[h(\vt{M}(t+1)) | \vt{M}(t)=\vt{m}'] -
h(\Phi_1(\vt{m}'))\big)- \frac{1}{2}(D^2h)(\Phi_1(\vt{m}'))\cdot \Gamma(\vt{m}')\Big|\Big|
\le\varepsilon(N)
$$
\end{lemma}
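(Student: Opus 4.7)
The plan is to invoke Lemma~\ref{lemma:Lemma1} directly, reducing the statement to the time-0 case via time-homogeneity of the underlying chain. The entire content of Lemma~\ref{lemma:GenLemma1} should follow with essentially no new work.

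First, I would observe that the joint process $\vt{X}\aN$ is a time-homogeneous Markov chain, since by assumption the one-step transition kernel $\vr{K}(\vt{m})$ depends only on the current occupancy measure $\vt{m}$ and not on $t$. Moreover, because the $N$ objects are identical and make their one-step transitions conditionally independently given the current occupancy measure, the vector of object states is exchangeable at every time step. A standard consequence is that the occupancy measure process $M\aN$ is itself a time-homogeneous Markov chain, and for every $t \in \nats$ and every $\vt{m}' \in \calU^n$,
\begin{align*}
\mathbb{E}[h(\vt{M}\aN(t+1)) \mid \vt{M}\aN(t) = \vt{m}']
= \mathbb{E}[h(\vt{M}\aN(1)) \mid \vt{M}\aN(0) = \vt{m}'].
\end{align*}

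Second, Lemma~\ref{lemma:Lemma1} supplies a function $\varepsilon(N) \to 0$, uniform in the initial condition $\vt{m} \in \calU^n$, such that
\begin{align*}
\bigl\|N\bigl(\mathbb{E}[h(\vt{M}\aN(1)) \mid \vt{M}\aN(0) = \vt{m}] &- h(\Phi_1(\vt{m}))\bigr) \\
&- \tfrac{1}{2}(D^2h)(\Phi_1(\vt{m})) \cdot \Gamma(\vt{m})\bigr\| \le \varepsilon(N).
\end{align*}
Specialising this inequality to $\vt{m} := \vt{m}'$ and using the preceding identity yields exactly the bound asserted in Lemma~\ref{lemma:GenLemma1}, with the same function $\varepsilon(N)$. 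Crucially, this $\varepsilon(N)$ does not depend on $\vt{m}'$ or on $t$, which is what the statement requires.

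The only delicate point worth flagging is the uniformity of the error in the initial condition, but this is already guaranteed by Lemma~\ref{lemma:Lemma1}, whose uniformity is inherited from the compactness of $\calU^n$ and the uniform continuity of $\Phi_1$, its derivatives, and $D^2h$ on that simplex. Consequently the main obstacle is not conceptual but merely notational: one must carefully articulate the Markov/exchangeability argument so that conditioning on $\vt{M}\aN(t) = \vt{m}'$ produces the same expectation as starting the chain at time $0$ from $\vt{m}'$. Once this is stated, the lemma is an immediate corollary of Lemma~\ref{lemma:Lemma1}.
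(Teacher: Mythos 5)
Your proposal is correct and matches the paper's own treatment: the paper simply notes that the proof ``is essentially the same as that of Lemma~\ref{lemma:Lemma1} and exploits the time-homogeneity of the Markov chain,'' which is precisely the reduction you carry out. Your explicit remark that the uniformity of $\varepsilon(N)$ in the initial condition (from compactness of $\calU^n$) is what makes the bound independent of $t$ and $\vt{m}'$ is the right point to flag.
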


\begin{lemma}\label{lemma:expectE}
  Under the assumptions of Lemma~\ref{lemma:Lemma1}, we obtain that\\
  $E=\esp{\vt{M}\aN(1)-\Phi_1(\vt{m})\mid M\aN(0)=m}$ satisfies
  $\mathbb{E}[E]=0$ and
  $\mathbb{E}[\vt{E} \otimes \vt{E}]=\Gamma(\vt{m})/N$.
\end{lemma}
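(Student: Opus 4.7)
The plan is to unpack the definition of $\vt{M}\aN(1)$ as an empirical average of independent per-object transitions. Conditional on $\vt{M}\aN(0)=\vt{m}$, the assumption stated in Section~\ref{sect:preliminaries} is precisely that the one-step transitions of the $N$ objects are mutually independent, with object $k$ (currently in state $i$) jumping to state $j$ with probability $\vr{K}_{ij}(\vt{m})$. Writing $N M_j\aN(1) = \sum_{k=1}^N \mathbf{1}_{\{X_k\aN(1)=j\}}$ and grouping objects by their state at time $0$, this becomes a sum of $N$ independent Bernoulli indicators, with exactly $N m_i$ of them having success probability $\vr{K}_{ij}(\vt{m})$. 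Everything will then follow by elementary moment computations on sums of independent Bernoullis.

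For the mean, I would use linearity of expectation: $\esp{N M_j\aN(1)\mid\vt{M}\aN(0)=\vt{m}} = \sum_{i=1}^n (N m_i)\vr{K}_{ij}(\vt{m}) = N\,(\Phi_1(\vt{m}))_j$, so $\esp{\vt{M}\aN(1)\mid\vt{M}\aN(0)=\vt{m}}=\Phi_1(\vt{m})$ and hence $\esp{\vt{E}}=0$.

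For the second moment $\esp{\vt{E}\otimes \vt{E}}$, since $\esp{\vt{E}}=0$ this is exactly the covariance matrix of $\vt{M}\aN(1)$. Using independence of the objects and the fact that, for a single object starting in state $i$, the indicator variables $\mathbf{1}_{\{X(1)=j\}}$ and $\mathbf{1}_{\{X(1)=k\}}$ have means $\vr{K}_{ij}(\vt{m})$, $\vr{K}_{ik}(\vt{m})$ and product $\vr{K}_{ij}(\vt{m})\,\delta_{jk}$ (mutual exclusivity of distinct states), I would compute $\cov{N M_j\aN(1),N M_k\aN(1)}$ separately for the diagonal ($j=k$) and the off-diagonal ($j\neq k$) cases, yielding $\sum_i N m_i \vr{K}_{ij}(1-\vr{K}_{ij})$ in the former case and $-\sum_i N m_i \vr{K}_{ij}\vr{K}_{ik}$ in the latter. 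Dividing by $N^2$ on both sides to turn $NM\aN$ covariances into $M\aN$ covariances gives exactly $\Gamma(\vt{m})/N$ as defined in Theorem~\ref{theo:main}.

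There is no real obstacle here: the proof is essentially a bookkeeping exercise once one recognises that, given the occupancy measure at time $0$, the $N$ per-object transitions are independent. The only point to be careful about is making the two cases ($j=k$ versus $j\neq k$) explicit and recalling that the variance contribution of each individual Bernoulli is $p(1-p)$ while the within-object covariance between two distinct output states is $-p_jp_k$, both of which sum correctly over the $N m_i$ objects in state $i$ and over all initial states $i$.
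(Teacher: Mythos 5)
Your proof is correct and follows essentially the same route as the paper: the paper also decomposes $M_j\aN(1)$ by the objects' initial states, modelling the group starting in state $i$ as a multinomial vector $\widehat{B}_{i,\cdot}$ with parameters $Nm_i$ and $\vr{K}_{i,\cdot}(\vt{m})$, independent across $i$, and then computes the mean and the diagonal/off-diagonal covariance entries exactly as you do. The only cosmetic difference is that you re-derive the multinomial moment formulas from individual Bernoulli indicators, whereas the paper quotes them directly.
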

\begin{proof}
We observe that by definition of our model, $M_j(1)$ is the following
random variable:
\begin{equation}\label{M1j}
M_j(1) = \frac{1}{N} \sum_{i=1}^n \widehat{B}_{ij}
\end{equation}  
where $(\widehat{B}_{i,.})$ is a random vector with multinomial
distribution, with parameters $N m_i$ and $(\vr{K}_{i,\cdot}(\vt{m}))$. The
variables are independent for different values of $i$ (in particular,
if $i\ne i'$ we have
$\cov{\widehat{B}_{ij},\widehat{B}_{i'k}}=0$). Moreover, for all $i$
and all $j\ne k$ :
\begin{align*}
  \esp{\widehat{B}_{ij}} &= Nm_i\vr{K}_{ij}(m)\\
  \var{\widehat{B}_{ij}}  &= Nm_{i}\vr{K}_{ij}(m)(1-\vr{K}_{ij}(m))\\
  \cov{\widehat{B}_{ij},\widehat{B}_{ik}} &= Nm_{i}\vr{K}_{ij}(m)\vr{K}_{ik}(m).
\end{align*}
This implies that
\begin{align}
  \mathbb{E}[M_j(1)]=\esp{\frac{1}{N} \sum_{i=1}^n
  \widehat{B}_{ij}}
  =\frac{1}{N} \sum_{i=1}^n \esp{\widehat{B}_{ij}}
  =  \Phi_1(\vt{m})_j.
  \label{eq:expectE}
\end{align}

The case of $\vt{E} \otimes \vt{E}$ makes again use of
Equation~\eqref{M1j}.  Note that by \eqref{eq:expectE},
$E_j=M_j(1)-\Phi_1(\vt{m})_j = M_j(1) - \mathbb{E}[M_j(1)]$.
This shows that
$\esp{(\vt{E} \otimes \vt{E})_{jk}}= \cov{M_j(1),M_k(1)}$, \emph{i.e.} the
covariance of $M_j(1)$ and $M_k(1)$.  We consider the case $k=j$ and
the case $k\not=j$ separately.

\textbf{Case} $k=j$.
\begin{align*}
  N\esp{(\vt{E} \otimes \vt{E})_{jj}} &= N\var{M_j(1)}\\
                                      &= N\var{\frac1N\sum_{i=1}^n
                                        \widehat{B}_{ij}}\\
                                      &= \frac1N \sum_{i=1}^n
                                        \var{\widehat{B}_{ij}}\\
                                      &=\sum_{i=1}^n
                                        m_i\vr{K}_{ij}(\vt{m})(1-\vr{K}_{ij}(\vt{m})),
\end{align*}
where the one but last equality comes from the independence of
the variables $\widehat{B}_{ij}$ for $i\in\{1\dots n\}$. 

\textbf{Case} $k\neq j$. This case is similar.
\begin{align*}
  N\esp{(\vt{E} \otimes \vt{E})_{jk}} &= N\cov{M_j(1),M_k(1)} \\
                                &=N\sum_{i=1}^n\sum_{i'=1}^n\frac{1}{N^2}\cov{
                                   \widehat{B}_{ij},\widehat{B}_{i'k}}\\
                                 &=-\sum_{i=1}^nm_i\vr{K}_{ij}(m)\vr{K}_{ik}(m),
\end{align*}
where in the double sum, only the terms $i=i'$ are non-zero because
$\widehat{B}_{ij}$ and $\widehat{B}_{i'k}$ are independent when
$i\ne i'$.
\end{proof}
\begin{lemma}\label{lemma:o(||E||^2)}
  Under the assumptions of Lemma~\ref{lemma:Lemma1} and using the
  notations of Lemma~\ref{lemma:Lemma1}, there exists a function
  $\varepsilon(N)$ such that $\lim_{N\to\infty}\varepsilon(N)=0$ and that
  $||N\mathbb{E}[o(||\vt{E}||^2)]||\le\varepsilon(N)$.
\end{lemma}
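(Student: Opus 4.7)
The plan is to convert the pointwise Taylor remainder $o(\|\vt{E}\|^2)$ into a \emph{uniform} quantitative bound over $\vt{m}\in\calU^n$, and then estimate its expectation by splitting at a threshold $\delta_N\to 0$.

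First, I would apply Taylor's theorem with integral remainder to $h$ around $\Phi_1(\vt{m})$. Since $D^2h$ is continuous on the compact simplex $\calU^n$, it is uniformly continuous there with some modulus $\omega(\delta)\to 0$; this upgrades the pointwise $o$-symbol to the uniform estimate $\|o(\|\vt{E}\|^2)\|\le \tfrac12\,\omega(\|\vt{E}\|)\,\|\vt{E}\|^2$, valid for \emph{every} $\vt{m}\in\calU^n$. Note also the deterministic bound $\|\vt{E}\|\le D$ for a universal constant $D$, since $\vt{M}(1)$ and $\Phi_1(\vt{m})$ both lie in the bounded simplex $\calU^n$.

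Next I would split the expectation at a threshold $\delta_N\to 0$. The low-$\|\vt{E}\|$ part is controlled by $\tfrac12\,\omega(\delta_N)\,N\,\esp{\|\vt{E}\|^2}$; by Lemma~\ref{lemma:expectE} and continuity of $\Gamma$ on the compact $\calU^n$, $N\,\esp{\|\vt{E}\|^2}$ is bounded by a constant independent of $\vt{m}$, so this term is $O(\omega(\delta_N))$. For the high-$\|\vt{E}\|$ part, bounded above by $\tfrac12 D^2\,N\,\Proba{\|\vt{E}\|>\delta_N}$, I would use that, conditionally on $\vt{M}(0)=\vt{m}$, $\vt{M}(1)$ is an average of $N$ independent bounded per-object indicator vectors; a coordinate-wise Hoeffding inequality with a union bound over the $n$ coordinates then gives $\Proba{\|\vt{E}\|>\delta_N}\le 2n\,e^{-c N\delta_N^2}$ with $c$ independent of $\vt{m}$. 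Choosing for instance $\delta_N = N^{-1/4}$ makes both contributions tend to $0$, and we set $\varepsilon(N)$ equal to their sum.

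The main obstacle will be guaranteeing that every constant appearing is truly independent of the starting state $\vt{m}$: the bare $o(\|\vt{E}\|^2)$ carries no uniform content, and one has to replace it by the modulus-of-continuity bound \emph{before} any expectation is taken. Compactness of $\calU^n$, uniform continuity of $D^2h$ and of $\Gamma$, and the fact that Hoeffding constants depend only on the range of the bounded summands (and not on $\vt{m}$), are precisely what make this uniformity work. The norm-dependence alluded to in the footnote is immaterial in the end, since all norms on $\reals^n$ are equivalent.
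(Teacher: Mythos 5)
Your proposal is correct and follows essentially the same route as the paper's proof: upgrade the pointwise $o(\|\vt{E}\|^2)$ to a uniform modulus-of-continuity bound using compactness of $\calU^n$, split the expectation at a small threshold, control the main term via $N\,\esp{\|\vt{E}\|^2}=O(1)$ from Lemma~\ref{lemma:expectE}, and kill the tail with a coordinate-wise Hoeffding bound. The only cosmetic difference is that you take an explicit $N$-dependent threshold $\delta_N=N^{-1/4}$, whereas the paper keeps a fixed threshold $a_\epsilon$ and optimises over $\epsilon$ at the end via $\varepsilon(N)=\inf_{\epsilon>0}(N\,\eta\,\gamma\,n\,e^{-2Na_\epsilon/n^2}+\epsilon)$; these are equivalent.
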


\begin{proof}
  First of all we note that, as $D^2h$ is continuous, it is uniformly
  continuous (because $\calU$ is compact). Hence, the term
  $o(||\vt{E}||^2) \in \reals^p $ is uniform in $\vt{m}$ and
  $||\vt{E}||^2$, \emph{i.e.}, there exist a function
  $\delta: \reals^n \rightarrow \reals$ and a constant $\gamma >0$
  such that $\lim_{||\vt{e}||\to0}\delta(\vt{e})=0$,
  $\delta(\vt{e}) \le \gamma$ for all $\vt{e} \in \reals^n$, and
  $||o(||\vt{E}||^2)||\le ||\vt{E}||^2\delta(\vt{E})$.  This implies
  $ \mathbb{E}[||o(||\vt{E}||^2)||] \le
  \mathbb{E}[||\vt{E}||^2\delta(\vt{E})] $.  The proof proceeds with
  the following derivation:

First, note that $\lim_{||\vt{e}||\to 0}\delta(\vt{e})=0$ implies that
for all $\epsilon>0$, there exists $a_\epsilon>0$ such that
$\delta{(\vt{e})}\le \epsilon$ for all $\vt{e}$ such that
$||\vt{e}|| \le a_\epsilon$. Therefore
\begin{align*}
  \mathbb{E}[||\vt{E}||^2\delta(\vt{E})]
  &\le \mathbb{E}[||\vt{E}||^2\delta(\vt{E})\mathbf{1}_{||\vt{E}||\ge a_\epsilon}] +
    \mathbb{E}[||\vt{E}||^2\epsilon\mathbf{1}_{||\vt{E}|| <
    a_\epsilon}]\\
  &\le \eta\,\gamma\,\mathbb{E}[\mathbf{1}_{||\vt{E}||\ge a_\epsilon}] + \epsilon\mathbb{E}[||\vt{E}||^2],
\end{align*}
for some constant\footnote{\label{footnote:ftntlemma:o(||E||^2)} The specific value of $\eta$ depends on the norm
  used; for instance $\eta =1$ for the infinity norm.}
$\eta<\infty$.

As indicated by Equation~\eqref{M1j},
$E_{j}=\sum_{i}(\widehat{B}_{ij}/N-m_iK_{ij}(m))$ is the sum of the
$n$ independent random variables $(\widehat{B}_{ij}/N-m_iK_{ij}(m))$
and $\widehat{B}_{ij}$ has a binomial distribution of parameters
$(Nm_i,K_{ij}(m))$. Hence, $E_j$ can be expressed as the sum of $N$
independent Bernoulli random variables. By Hoeffding's inequality,
\begin{align*}
  \Proba{\norm{E_j}\ge t} \le e^{-2Nt^2}. 
\end{align*}
This implies that
$\Proba{\norm{E}\ge a_\epsilon}\le \sum_{j=1}^n\Proba{\norm{E_j}\ge
  a_\epsilon/n} \le ne^{-2Na_\epsilon/n^2}$.

Moreover, by Lemma~\ref{lemma:expectE},
$\mathbb{E}[||\vt{E}||^2]\le 1/N$.  This shows that
\begin{align*}
  \mathbb{E}[||\vt{E}||^2\delta(\vt{E})]
  &\le \eta\,\gamma\,\mathbb{P}(||\vt{E}||\ge a_\epsilon) + 
    \epsilon\mathbb{E}[||\vt{E}||^2]\\ 
  &\le \eta\,\gamma\, n e^{-2Na_{\epsilon}/n^2} + \frac{\epsilon}{N} 
\end{align*}
The assert follows by using
$\varepsilon(N)=\inf_{\epsilon>0}(N\, \eta\,\gamma\, n
e^{-2Na_\epsilon/n^2} + \epsilon)$. 
\end{proof}

\begin{lemma}\label{lemma:expectEt}
  Under the assumptions of Lemma~\ref{lemma:Lemma1} and that $M\aN(0)$
  converges weakly to $\mu(0)$ as $N$ goes to infinity, for any
  continuous function $g:\calU^n\to\nnreals^p$ and all $t$,
  there exists a function $\varepsilon_{t,g}(N)$ such that
  $\lim_{N\to\infty}\varepsilon_{t,g}(N)=0$ and
  \begin{align*}
    \norm{\mathbb{E}[g(\vt{M}(t))]-g(\mu(t))}\le\varepsilon_{t,g}(N).
  \end{align*}
\end{lemma}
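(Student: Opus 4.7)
The plan is to proceed by induction on $t$, noting that the conclusion requires only continuity and boundedness of $g$ (both guaranteed on the compact simplex $\calU^n$), and no smoothness at all. For the base case $t=0$, since $\MN(0)$ converges weakly to the Dirac mass at $\mu(0)$ and $g$ is continuous and bounded on $\calU^n$, the Portmanteau theorem directly yields $\esp{g(\MN(0))}\to g(\mu(0))$, which defines a suitable $\varepsilon_{0,g}(N)$.

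For the inductive step, I would first establish a uniform one-step approximation: the conditional expectation $\tilde g^N(m):=\esp{g(\MN(1))\mid \MN(0)=m}$ converges to $g(\Phi_1(m))$ uniformly in $m\in\calU^n$. By Lemma~\ref{lemma:expectE}, $\esp{\MN(1)\mid \MN(0)=m}=\Phi_1(m)$, and each $N\MN_j(1)$ is a sum of $N$ independent Bernoulli variables conditionally on $\MN(0)=m$, so Hoeffding's inequality (exactly as invoked in the proof of Lemma~\ref{lemma:o(||E||^2)}) gives
$$
\sup_{m\in\calU^n}\Proba{\norm{\MN(1)-\Phi_1(m)}\ge\delta \,|\, \MN(0)=m} \;\le\; n\,e^{-2N\delta^2/n^2}
$$
for every $\delta>0$. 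Since $g$ is uniformly continuous and bounded on $\calU^n$, splitting the expectation over $\{\norm{\MN(1)-\Phi_1(m)}\le\delta\}$ and its complement yields
$$
\sup_{m\in\calU^n}\norm{\tilde g^N(m)-g(\Phi_1(m))} \;\le\; \omega_g(\delta) + 2\norminf{g}\, n\, e^{-2N\delta^2/n^2},
$$
where $\omega_g$ is the modulus of continuity of $g$. Choosing $\delta=\delta(N)\to 0$ slowly enough (e.g.\ $\delta(N)=N^{-1/4}$) produces a bound $\varepsilon^*_g(N)\to 0$ that is independent of $m$.

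To close the induction, use time-homogeneity of the chain and the tower property to write, with $\mu(t+1)=\Phi_1(\mu(t))$,
$$
\esp{g(\MN(t+1))}-g(\mu(t+1))=\esp{\tilde g^N(\MN(t))-(g\circ\Phi_1)(\MN(t))}+\bigl(\esp{(g\circ\Phi_1)(\MN(t))}-(g\circ\Phi_1)(\mu(t))\bigr).
$$
The first summand is bounded in norm by $\varepsilon^*_g(N)$ from the uniform one-step estimate. The second is controlled, via the induction hypothesis applied to the continuous function $g\circ\Phi_1:\calU^n\to\nnreals^p$, by a term $\varepsilon_{t,g\circ\Phi_1}(N)\to 0$. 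Setting $\varepsilon_{t+1,g}(N):=\varepsilon^*_g(N)+\varepsilon_{t,g\circ\Phi_1}(N)$ completes the step.

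The main obstacle is precisely the upgrade from pointwise to \emph{uniform-in-$m$} convergence of the one-step conditional expectation, because it is through this uniformity that randomness of $\MN(t)$ in the inductive step can be absorbed into a single error bound independent of the value of $\MN(t)$. Compactness of $\calU^n$, uniform continuity of $g$ inherited from that compactness, and the exponential concentration provided by Hoeffding's inequality are exactly what make this step work without requiring any differentiability of $g$, in contrast to the stronger quantitative statement of Lemma~\ref{lemma:Lemma1}.
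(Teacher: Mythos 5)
Your proof is correct and follows essentially the same route as the paper's: induction on $t$ with base case from the weak convergence of $M\aN(0)$, decomposition of the error at time $t+1$ into a one-step term (controlled by uniform continuity of $g$ on the compact simplex together with concentration of $M(t+1)$ around $\Phi_1(M(t))$) and an inductive term handled by applying the hypothesis to the continuous function $g\circ\Phi_1$. The only difference is cosmetic: you make the uniform one-step convergence explicit via Hoeffding's inequality and a $\delta(N)\to 0$ truncation, whereas the paper bounds the same term by the modulus of continuity of $g$ and appeals to Lemma~\ref{lemma:expectE} for the uniform convergence of $E=M(t+1)-\Phi_1(M(t))$ to zero.
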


\begin{proof}
  We proceed by induction on $t$. The lemma holds for $t=0$ because
  $M\aN(0)$ converges weakly to $\mu(0)$.  As $g$ is continuous, there
  exists a function $\delta:\R^+\to\R^+$ such that
  $\norm{g(m)-g(m')}\le\delta(\norm{m-m'})$ and
  $\lim_{r \rightarrow 0} \delta(r)=0$.  Moreover, as $g$ and $\Phi_t$
  are continuous, $g\circ\Phi_t$ is also uniformly continuous (and the
  continuity is uniform since $\calU^n$ is compact). Hence
  \begin{align*}
    &\esp{\norm{g(M(t+1))-g(\mu(t+1))}}\\
    &=\esp{\norm{g(M(t+1))-g(\Phi_1(M(t)))+g(\Phi_1(M(t)))-g(\mu(t+1))}}\\
    &\le\esp{\delta({\norm{M(t+1)-\Phi_1(M(t))}})}+\esp{g\circ\Phi_1(M(t))-g\circ\Phi_{1}(\mu(t))}\\
    &\le \esp{\esp{\delta(\norm{E})\mid M(t)}} + \varepsilon_{t,g\circ\Phi_1},
  \end{align*}
  where $E=M(t+1)-\Phi_1(M(t))$ converges to $0$ (uniformly in M(t))
  by Lemma~\ref{lemma:expectE}. 
\end{proof}

\begin{proof}[Of the main theorem]
  We proceed by induction on $t$. The theorem clearly holds for $t=0$
  because $\Phi_0(\vt{M\aN(0)})=\vt{M}\aN(0)$ by definition of
  $\Phi_0$.  Assume that the theorem now holds for some $t\ge 0$. We
  have :
  \begin{align*}
    N\left(\mathbb{E}[h(\vt{M}(t+1))] - h(\mu(t+1))\right)
    =&
       N\mathbb{E}[h(\vt{M}(t+1)) - h(\Phi_1(\vt{M}(t)))] \\
     &+ N(\mathbb{E}[h(\Phi_1(\vt{M}(t)))] - h(\mu(t+1))).
  \end{align*}
  
  We will analyse the two lines separately. For the first line, the
  idea is to use Lemma~\ref{lemma:Lemma1}. Indeed this line is equal
  to
  \begin{align*}
    &\esp{N\esp{h(\vt{M}(t+1)) - h(\Phi_1(\vt{M}(t)))\mid M(t)}}\\ &
    =\esp{\frac{1}{2}(D^2h)(\Phi_1(\vt{M}(t)))\cdot
    \Gamma(\vt{M}(t))}+\Theta(N),
  \end{align*}
  where by Lemma~\ref{lemma:Lemma1}, $\Theta(N)$ is such that
  $\norm{\Theta(N)}\le \varepsilon(N)$.  By Lemma~\ref{lemma:expectEt}
  with $g=(D^2h)(\Phi_1)\cdot \Gamma$, as $N$ goes to infinity, this
  quantity converges to
  \begin{align}
    \frac{1}{2}(D^2h)(\Phi_1(\mu(t)))\cdot
    \Gamma(\mu(t))=\frac{1}{2}(D^2h)(\mu(t+1))\cdot
    \Gamma(\mu(t))\label{eq:piece1}.
  \end{align}
  
  For the second line, the idea is to apply the induction
  hypothesis to $h\circ \Phi_1$ which can be done because the
  $h\circ\Phi_1$ is twice differentiable (because both $h$ and
  $\Phi_1$ are). This shows that
  \begin{align*}
    &N(\mathbb{E}[h(\Phi_1(\vt{M}(t)))] - h(\mu(t+1)))\\
    &=N(\mathbb{E}[h(\Phi_1(\vt{M}(t)))] - h(\Phi_{1}(\mu(t))))\\
    &=D(h\circ\Phi_1)(\mu(t))V_t+ \frac{1}{2}D^2(h\circ\Phi_1)(\mu(t)) \cdot W_t + \varepsilon_{t, h \circ \Phi_1}(N)
  \end{align*}
  The first term can be dealt with by applying the chain rule
  $D(h \circ \Phi_1) = (Dh)(\Phi_1) (D \Phi_1)$ which shows that:
\begin{align}
  D(h\circ\Phi_1)(\mu(t))V_t
  &=(Dh)(\Phi_1(\mu(t))) (D \Phi_1)(\mu(t)) V_t\nonumber\\
  &=(Dh)(\mu(t+1))  (D \Phi_1)(\mu(t)) V_t\nonumber\\
  &=(Dh)(\mu(t+1))  A_t V_t.\label{eq:piece2}
\end{align}
For the second term, we apply the product rule and again the chain
rule:
\begin{align*}
  \frac{1}{2}D^2(h\circ\Phi_1)\cdot W_t
  &=\frac{1}{2}D\big((Dh)(\Phi_1) (D \Phi_1)\big)\cdot W_t\\
  &=\Big(D\big((Dh)(\Phi_1)\big) \cdot (D \Phi_1) + (Dh)(\Phi_1) \cdot D(D \Phi_1)\Big) \cdot \frac{1}{2}W_t\\
  &=\Big( (D^2h)(\Phi_1) \cdot (D \Phi_1)  (D \Phi_1)^T +
  (Dh)(\Phi_1)  (D^2 \Phi_1)
  \Big) \cdot \frac{1}{2}W_t
\end{align*}
By applying the last function at the point $\mu(t)$ we get
that:
\begin{align*}
  \frac{1}{2}D^2(h\circ\Phi_1)(\mu(t)) \cdot W_t= &(D^2h)(\Phi_1(\mu(t)))\cdot (D \Phi_1)(\mu(t)) \frac{1}{2}W_t (D \Phi_1)^T(\mu(t)) \\
                                                          &+(Dh)(\Phi_1(\mu(t))) (D^2 \Phi_1 (\mu(t)))\cdot \frac{1}{2}W_t,
\end{align*}
which, using the definition of $\mu(t+1)=\Phi_{1}(\mu(t))$ and the assumptions
$A_t=(D \Phi_1)(\mu(t))$ and
$B_t=(D^2 \Phi_1 (\mu(t)))$, is the same as:
\begin{align}
  \frac{1}{2}(D^2h)(\mu(t+1)) \cdot A_t  W_t A_t^T +
  \frac{1}{2}(Dh)(\mu(t+1)) (B_t \cdot W_t).
  \label{eq:piece3}
\end{align}
The theorem holds by combining Equations~\eqref{eq:piece1},
\eqref{eq:piece2} and \eqref{eq:piece3}.
\end{proof}

\subsubsection{Proof of Theorem~\ref{theo:steady}}
\label{ssec:proof_steady}

\begin{proof} The proof is inspired by the proof of
  \cite[Theorem~3.1]{gast2017refined} and uses ideas of Stein's
  method.  Because many details are similar to the proof of
  Theorem~\ref{theo:main}, we skip some details of computation in this
  proof.

  Let $h$ be a twice-differentiable function and let $G_h$ be the
  function defined for all $m$ by:
  \begin{align*}
    G_h(m) = \sum_{t=0}^\infty [h(\Phi_t(m))-h(\mu(\infty))]. 
  \end{align*}
  $G_h(m)$ is well defined because $\mu(\infty)$ is exponentially stable
  attractor.
  
  By construction, for any $m$ we have
  \begin{align*}
  G_h(m) &= h(m) - h(\mu(\infty)) + \sum_{t=1}^\infty
           [h(\Phi_t(m))-h(\mu(\infty))]\nonumber\\
         &= h(m) - h(\mu(\infty)) + G_h(\Phi_1(m))
  \end{align*}
  The above equation is a discrete time Poisson equation and implies
  that for any $m$:
  \begin{align}
    h(m) - h(\mu(\infty)) = G_h(m) - G_h(\Phi_1(m))\label{eq:Poisson}
  \end{align}

  Assume that at time $0$, the initial state $M(0)$ is distributed
  according to the stationary distribution of the system of size
  $N$. By the definition of stationarity, at time $1$, $M(1)$ is also
  distributed according to the same stationary distribution and we
  have $\esp{G_h(M(0))}=\esp{G_h(M(1))}$.
  
  By using \eqref{eq:Poisson} and then the above equation, we get:
  \begin{align*}
    \esp{M(0)} - h(\mu(\infty)) &= \esp{M(0) - h(\mu(\infty))}\\
                                &=\esp{G_h(M(0))-G_h(\Phi_1(M(0)))}\\
                                &=\esp{G_h(M(1)) - G_h(\Phi_1(M(0)))}
  \end{align*}
  By Lemma~\ref{lemma:Lemma1}, this shows that :
  \begin{align*}
    N(\esp{M(\infty)} - h(\mu(\infty))) 
    &=\esp{G_h(M(1)) - G_h(\Phi_1(M(0)))}\\
    &= \esp{\frac12 D^2(G_h)(\Phi_1(M(0)))\cdot \Gamma(M(0))}+o(1)\\ 
    &= \frac12 D^2(G_h)(\Phi_1(\mu(\infty)))\cdot
      \Gamma(\mu(\infty))+o(1), 
  \end{align*}
  where the last equality comes from the fact that the stationary
  distribution of the system of size $N$ converges weakly to a Dirac
  measure in $\mu(\infty)$ as $N$ goes to infinity (see
  \cite[Corollary~14]{gastgaujalDEDS}). 

  To conclude the proof, the only remaining step is to compute the
  second differential of $G_h$ which can be expressed as the infinite
  sum:
  \begin{align*}
    D^2(G_h)(\mu(\infty)) = \sum_{t=0}^\infty D^2(h\circ\Phi_t)(\mu(\infty)). 
  \end{align*}
  The expressions for $V_\infty$ and $W_{\infty}$ come from plugging
  the above equations into Equation~\eqref{eq:piece1},
  \eqref{eq:piece2} and \eqref{eq:piece3}. The uniqueness of the
  solution of the Lyapunov equation is due to the fact that the fixed
  point $\mu(\infty)$ is exponentially stable and therefore also
  linearly stable. 
\end{proof}

\section{Refined Mean Field Model for SEIR}
\label{sect:RefSEIR}
In this section we provide a simple example that illustrates the results for the refined mean field model of the simple computer epidemic SEIR example presented in~\cite{BHLM13}.
Each object in the model consists of four local states: Susceptible (S), Exposed (E), Infected (I) (and active) and Recovered (R). The four-state SEIR model of an individual object is shown in Figure~\ref{fig:seir_model}.

\begin{figure}
\begin{center}
\includegraphics[width=0.4\textwidth]{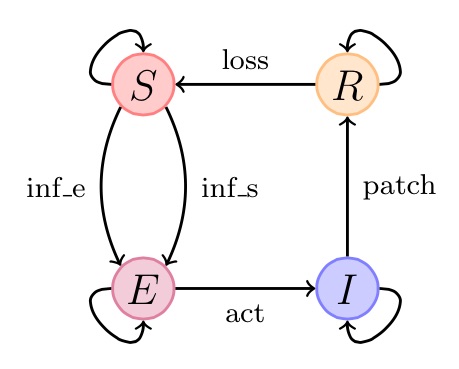}
\end{center}
\caption{\label{fig:seir_model} SEIR model of individual object}
\end{figure}

Its discrete time evolution is given by the following probability transition matrix $\vr{K}$ in which $\n{S}$, $\n{E}$, $\n{I}$ and $\n{R}$ denote the fraction of objects in the system that are in local state S, E, I and R, respectively:
\begin{align*}
  \vr{K}(\n{S}, \n{E},\n{I},\n{R})
  = \left(
  \begin{array}{cccc}
    1 - (\alpha_e + \alpha_i\n{I}) & \alpha_e + \alpha_i \n{I} & 0 & 0  \\
    0 & 1- \alpha_a & \alpha_a & 0 \\
    0 & 0 & 1 - \alpha_r & \alpha_r \\
    \alpha_l & 0 & 0 & 1 - \alpha_l \\
  \end{array}
  \right)
\end{align*}
In other words, a susceptible becomes exposed with probability
$(\alpha_e+\alpha_i m_I)$ -- \emph{i.e.}, $\alpha_e$ denotes the
external and $\alpha_i$ the internal infection probability --; An
exposed node activates his infection with probability $\alpha_a$; An
infected recovers with probability $\alpha_r$; and $\alpha_l$ is the
probability to loose the protection against infection.

\subsection{Computation of $A$, $B$ and $\Gamma$}

We illustrate how to apply Theorem~\ref{theo:main} in
its simplified form, when $h$ is the identity function, as in
Corollary~\ref{coro:main}--\emph{(i)}.  The first step is to compute
the Jacobian and the Hessian of the function $\Phi_1$ for a generic
occupancy measure vector $m$ at time step $t$. Written as a column
vector, the function $\Phi_1(m)=mK(m)$ is given by
\begin{align*}
  \Phi_1(m) = \left(
  \begin{array}{c}
    m_S(1-\alpha_e-\alpha_im_I) + \alpha_lm_R\\
    m_S(\alpha_e+\alpha_im_I) + (1-\alpha_a)m_E\\
    m_E\alpha_a + (1-\alpha_r)m_I\\
    \alpha_rm_I + (1-\alpha_l)m_R
  \end{array}
  \right)
\end{align*}
Hence, the Jacobian is the following $4 \times 4$ matrix:
$$D(\Phi_1)(\n{S}, \n{E},\n{I},\n{R})
= \left(
    \begin{array}{cccc}
      1 - (\alpha_e + \alpha_i\n{I}) &  0& -\alpha_i\n{S} & \alpha_l  \\
      \alpha_e + \alpha_i \n{I} & 1- \alpha_a & \alpha_i\n{S} & 0 \\
      0 & \alpha_a & 1 - \alpha_r & 0 \\
      0 & 0 & \alpha_r & 1 - \alpha_l \\
    \end{array}
  \right)$$

The Hessian is a $4 \times 4 \times 4$ tensor. We provide them as 4 matrices of $4 \times 4$, one for each function $(\Phi_1)_j$, where $j \in \{S,E,I,R\}$:

$$D^2((\Phi_1)_S)(\n{S}, \n{E},\n{I},\n{R})
= \left(
    \begin{array}{cccc}
      0 &  0& -\alpha_i & 0  \\
      0 & 0 & 0 & 0 \\
      -\alpha_i & 0 & 0 & 0 \\
      0 & 0 & 0 & 0 \\
    \end{array}
  \right)$$
  
$$D^2((\Phi_1)_E)(\n{S}, \n{E},\n{I},\n{R})
= \left(
    \begin{array}{cccc}
      0 &  0& \alpha_i & 0  \\
      0 & 0 & 0 & 0 \\
      \alpha_i & 0 & 0 & 0 \\
      0 & 0 & 0 & 0 \\
    \end{array}
  \right)$$
  
The matrices for $I$ and $R$ are two $4 \times 4$ zero-matrices.
The $4 \times 4$ matrix $\Gamma$ depends on $K$ and the occupancy measure $m$ as defined in Lemma~\ref{lemma:Lemma1}.
The refined mean field approximation of the occupancy measure vector is thus given by 
$\mathbb{E}[\vt{M}\aN(t)] \approx \mu(t) + V_t/N$, where $V_t$ is computed recursively, according to Theorem~\ref{theo:main}.

\subsection{Dynamics of the SEIR Model and its Approximations}

\begin{figure}[ht]
  \begin{tabular}{cc}
    \includegraphics[width=0.45\textwidth]{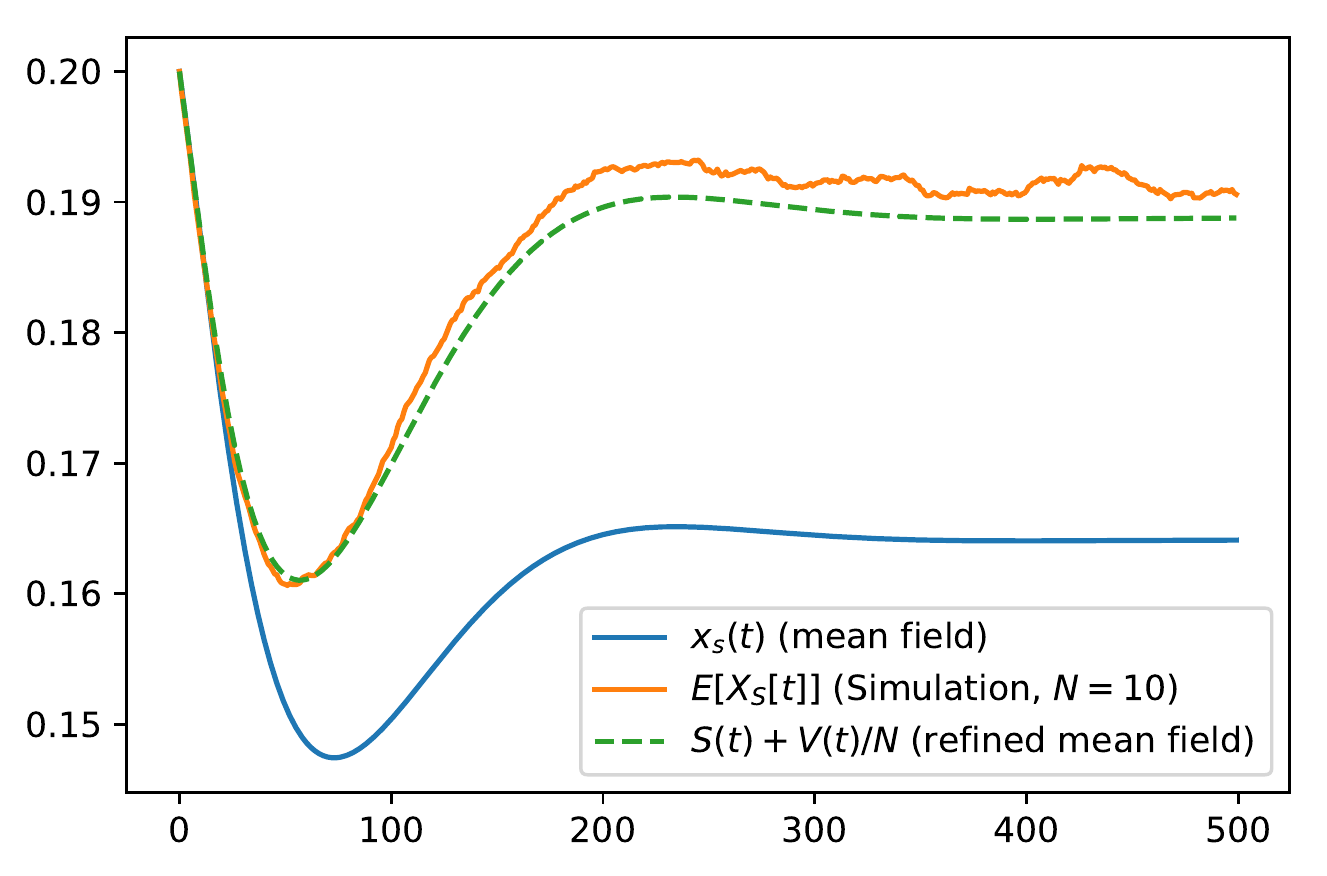}
    &\includegraphics[width=0.45\textwidth]{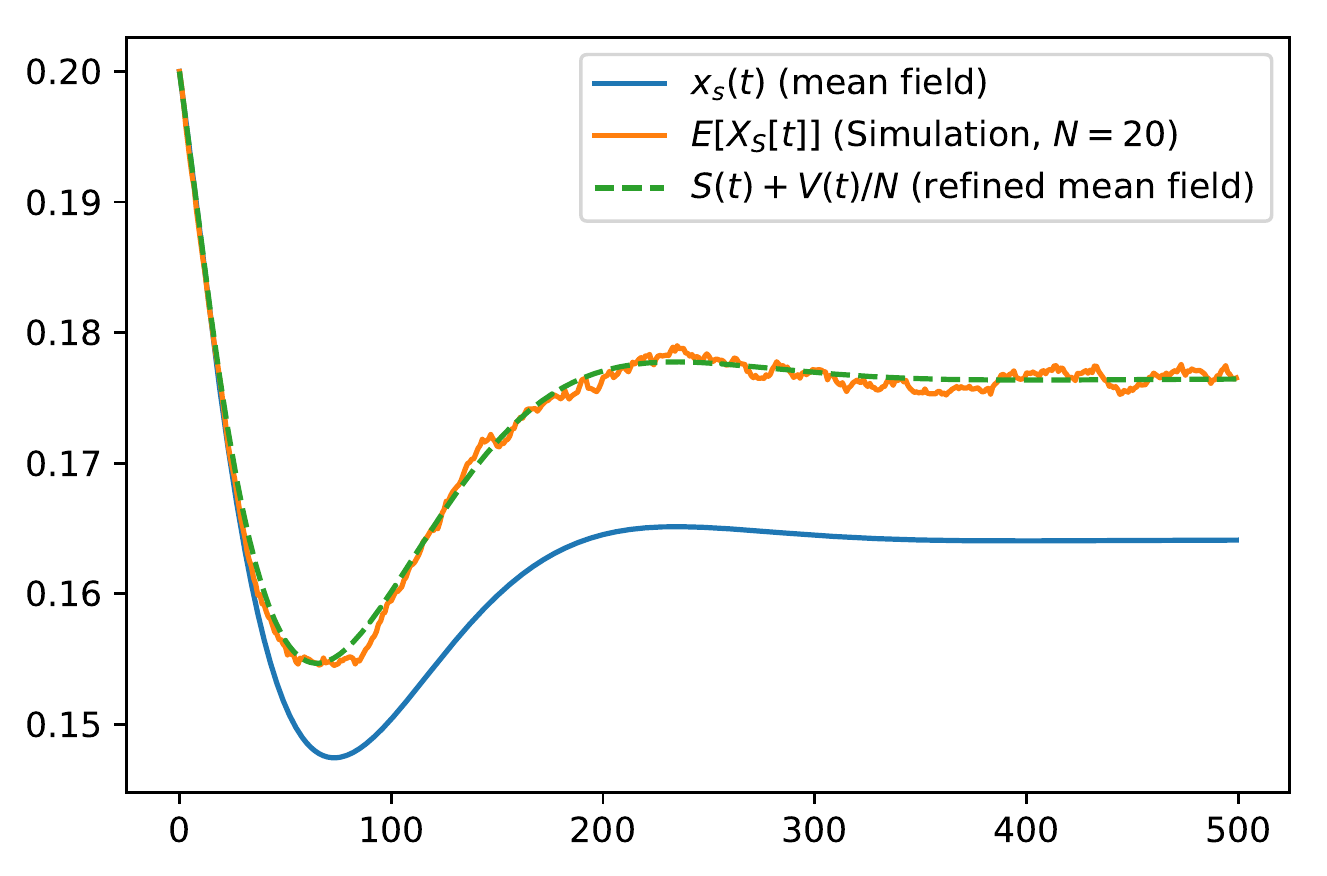}\\
    $N=10$&$N=20$\\
    \includegraphics[width=0.45\textwidth]{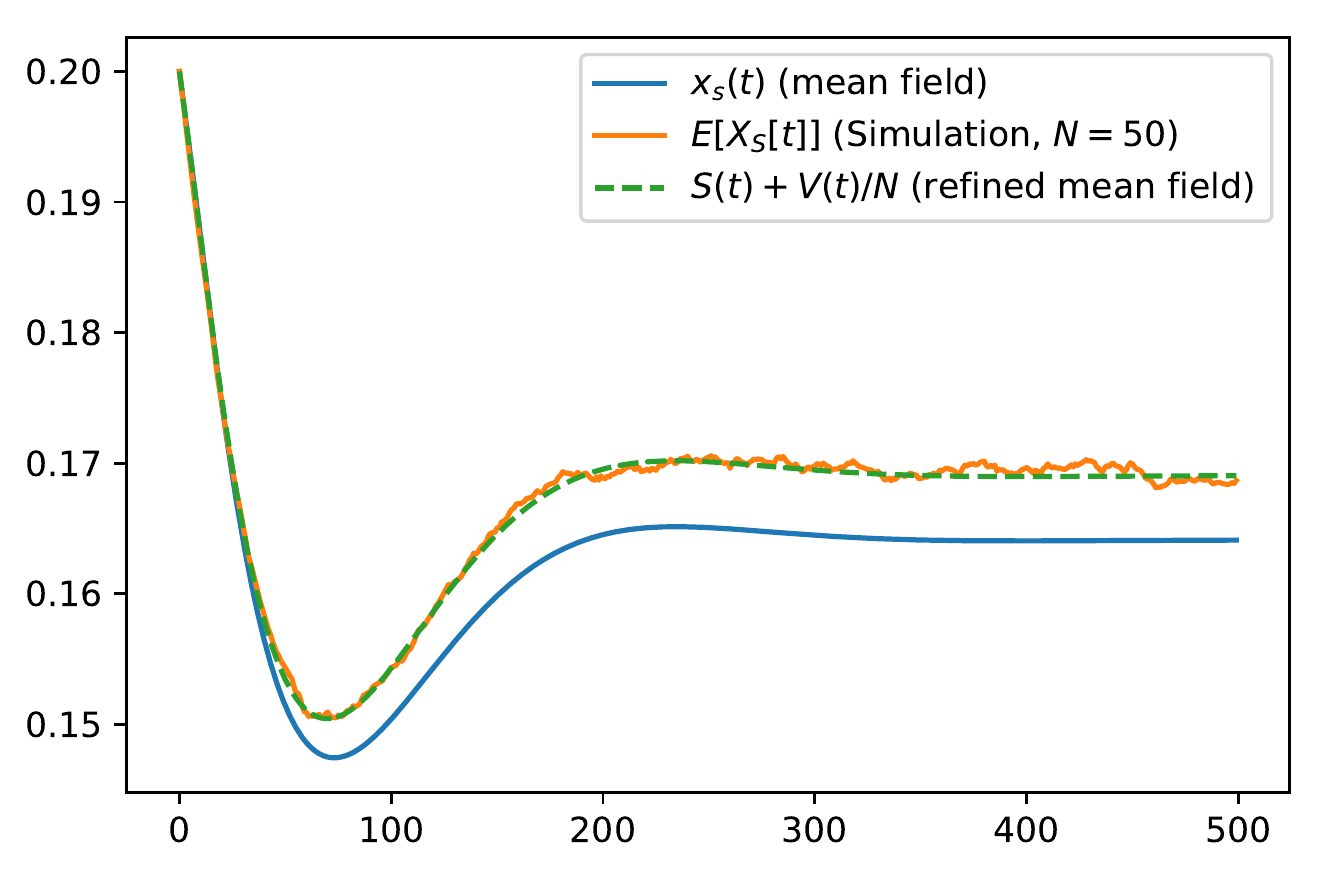}
    &\includegraphics[width=0.45\textwidth]{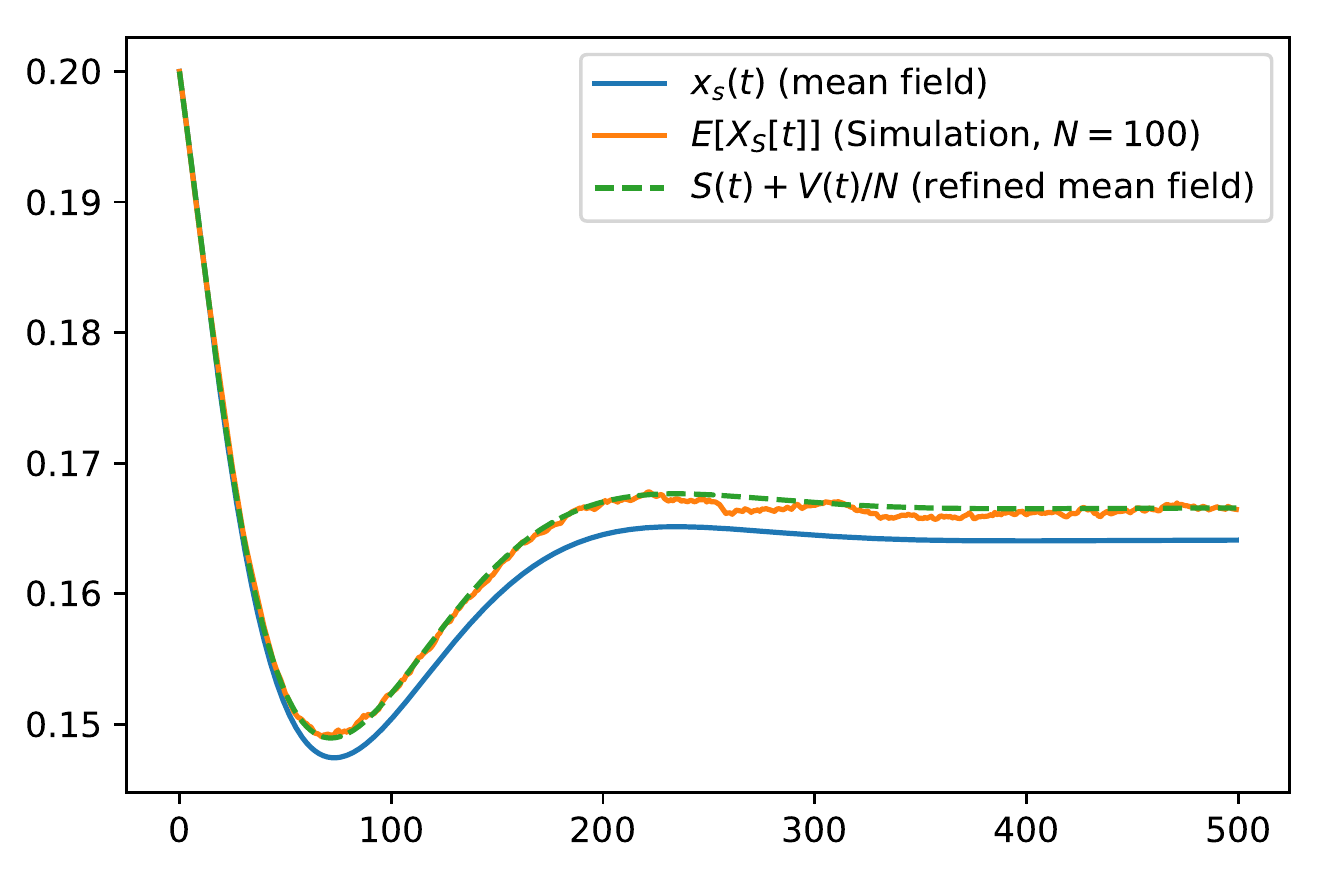}\\
    $N=50$&$N=100$
  \end{tabular}
  \caption{\label{fig:res} Evolution of the fraction of objects in
    state $S$ for population sizes $N=10$, $N=20$, $N=50$ and
    $N=100$. The figures compare the classical mean field
    approximation (obtained with \texttt{python}--\texttt{numpy}) with
    the refined one and with the average of 10,000 simulation runs of
    the system.  }
\end{figure}

We consider a model with the following parameter values for the local
transition probabilities:
$\alpha_e=0.01, \alpha_i=0.08,\alpha_r=0.02,\alpha_l=0.01$ and
$\alpha_a=0.04$. Initially,
$M(0)=(0.2,0.2,0.2,0.4)$. Figure~\ref{fig:res} shows the results for
the classical mean field approximation, the refined mean field
approximation and the average of 100,000 runs of a stochastic
simulation of the model obtained. The results are given for
population size $N=10$, $N=20$, $N=50$ and $N=100$, respectively; time
$t$ ranges from $0$ to $500$ time units.

We observe that, as exepcted, the gap between the classical mean field
approximation and the simulation is relatively small and decreases
with $N$. Still, for $N=10$, we observe a clear difference between the
classical mean field approximation and the simulation, whereas the
refined mean field provides a much closer approximation (in this case,
the graphs overlap almost everywhere). With the increase of the
population size $N$ both approximations converge to the same value, as
well as the value obtained by simulation: for $N\ge50$, the curves
are almost indistinguishable.

\begin{figure}[ht]
  \begin{center}
    \begin{tabular}{cc}
      \includegraphics[width=0.45\textwidth]{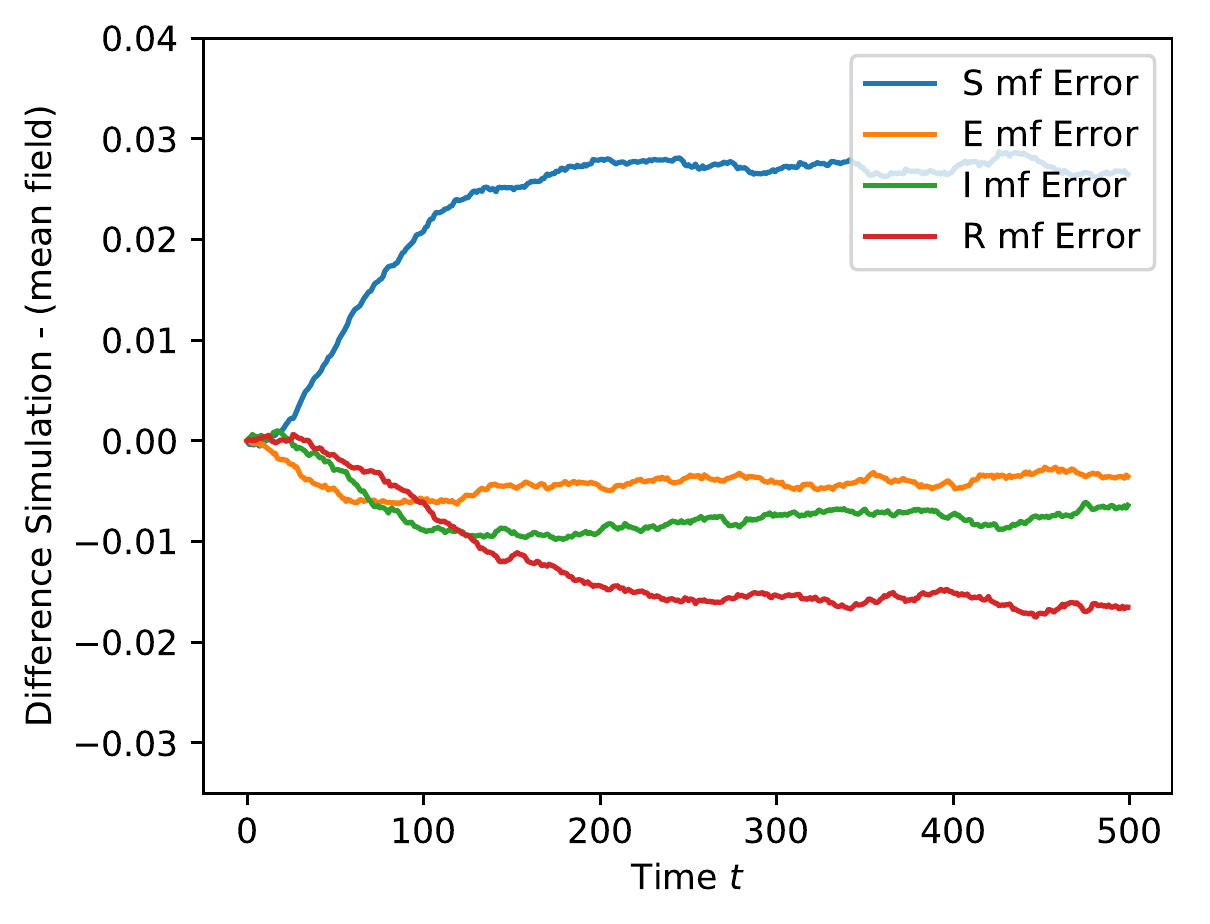} 
      &\includegraphics[width=0.45\textwidth]{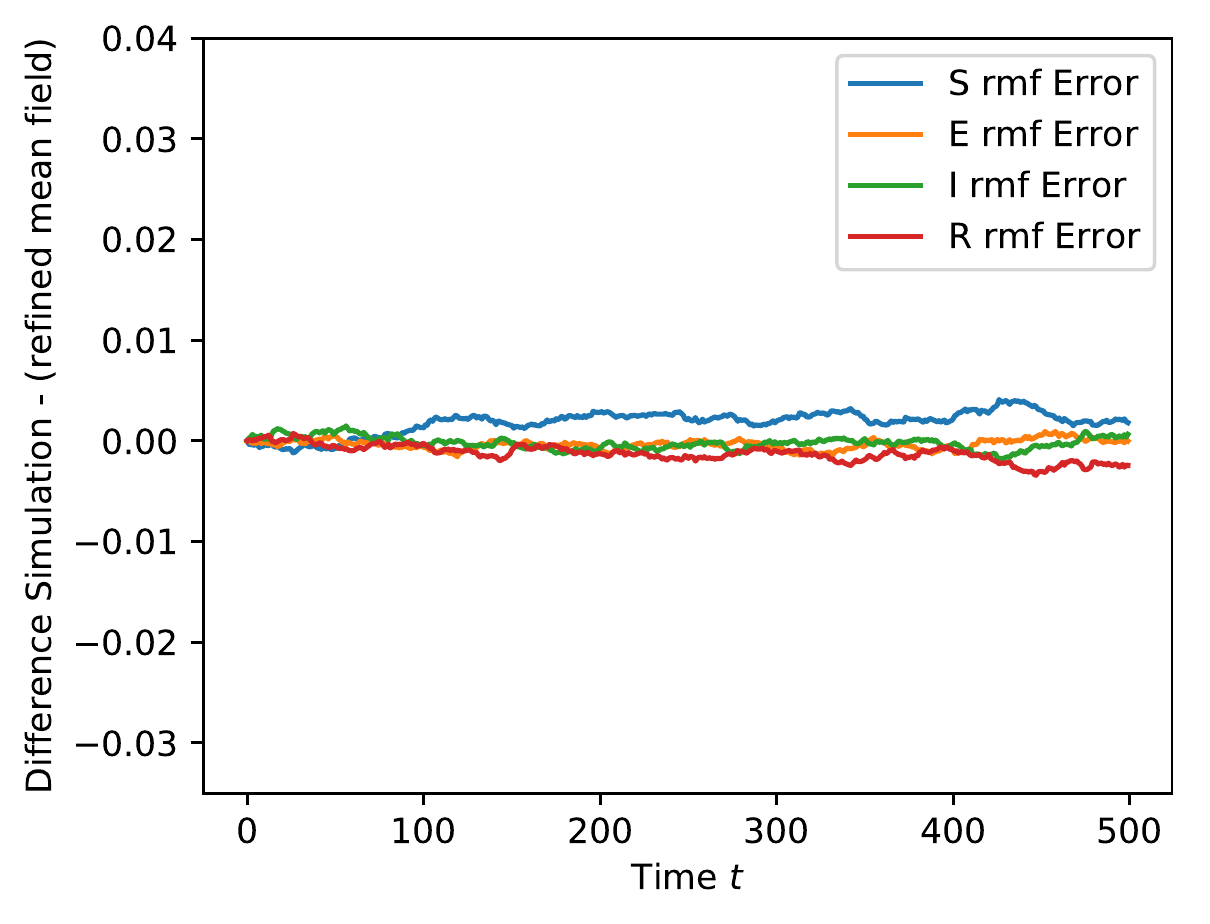}\\
      Error of the mean field approximation
      &
      Error of the refined mean field approx.
    \end{tabular}
\end{center}
\caption{\label{fig:diff} SEIR model: Quantification of the
  difference (error) between the simulation results and the classical
  mean field approximation (left) and the refined mean field result
  (right), respectively, for $N=10$. The results show simulation value
  minus mean field value.}
\end{figure}

To highlight the differences, we plot in Figure~\ref{fig:diff} the
difference between the two approximations with respect to the
simulation~: On the left panel, we plot a function of time for the
quantities $\esp{M(t)}-\mu(t)$; On the right we plot
$\esp{M(t)}+V_t/N-\mu(t)$ (in both cases for $N=10$). We observe that
the refined mean field approximation (right panel) is an order of
magnitude closer to the value obtained by simulation: while the error
of the classical mean field approximation can be larger than $0.05$,
the error of the refined mean field approximation remains always
smaller than $0.01$. 

These two figures illustrate that Theorem~\ref{theo:main} is not just
valid asymptotically, but actually it refines the classical
mean field approximation for relatively small values of $N$. To go further, we
study the steady-state distribution in Table~\ref{tbl:steadySEIR} in
which we display the average proportion of objects in states $S$, $E$,
$I$ or $R$ estimated by simulation, refined mean field approximation
and classical mean field approximation. This illustrates the approximation accuracy for
the steady state of the SEIR example for each local state of an
object. As for the two previous figures, this table illustrates that
the refined mean field approximation provides very accurate estimates
of the true stationary distribution even for very small values of $N$,
which shows that the asymptotic results presented in
Theorem~\ref{theo:steady} are also useful for small values of
$N$. These results are in line with the results presented
in~\cite{gast2017refined} for continuous time mean field models.

\begin{table}[ht]
\begin{center}
\begin{tabular}{|c|c|c|c|c|}\hline
State                                & $S$ & $E$ & $I$ & $R$ \\\hline
Simulation ($N=10$)            & 0.191 & 0.115 & 0.231 &0.462 \\ \hline
Refined mean field ($N=10$) & 0.189& 0.116 & 0.232& 0.464\\ \hline
Mean field ($N=10$)               & 0.164 & 0.119 & 0.239 & 0.478\\ \hline
\end{tabular}
\end{center}
\caption{\label{tbl:steadySEIR} SEIR model: Comparison of the
  accuracy of the mean field and refined mean field approximation. The
  columns show the average proportion of objects in the states
  susceptible (S), exposed (E), infected (I) and recovered (R),
  respectively. Each item in the table was computed by measuring the
  occupancy measure for times $t = 1000$, i.e. when the systems'
  occupancy measure has reached a sufficiently stable
  value. Simulation values are averages over $100,000$ simulations. }
\end{table}

\section{Refined Mean Field Model for WSN}
\label{sect:RefWSN}

The next example concerns a simple model of a wireless sensor network~\cite{bortolussi2013bounds}. Such networks
are composed of wireless sensor nodes and gateways. This example
serves two purposes. First it shows that the assumption of homogeneous
objects is not restrictive: In this example, there are two classes of
objects, which is represented by having a block-diagonal matrix
$K$. Second, we use it to consider a function $h$ that is not just the
projection on one coordinate. 

Wireless sensor nodes have three local states. In the initial state,
$e$, a sensor node waits for detecting an event of interest and
collects data for that event. After that, the node moves to state $c$
to communicate its data to an available gateway. The communication
attempt may timeout if no gateway is available. In that case the
sensor node moves to state $d$, introducing some delay before moving
back to state $c$ for a further communication attempt.

Gateway nodes have two states. Initially they are in state $a$ and available to receive data from a sensor node. Upon connection to a sensor node they move to state $b$ during which they are busy processing the data. When in state $b$ they are temporarily unavailable for communication with other sensor nodes. After processing the batch of data they move back to state $a$.

We consider a model 
where objects have five local states $\{a,b,c,d,e\}$, where $a$ and
$b$ are states of a gateway node and $c,d$ and $e$ are states of a
sensor node, i.e. each object in the model can behave either as a
gateway or as a sensor, but it cannot change its behaviour from that
of a gateway to that of a sensor, or vice-versa.  A system is then
composed of $N$ (syntactical) homogeneous objects with a fixed
fraction $\n{G}=\n{a} + \n{b}$
of gateway nodes and a fraction $\n{W}=\n{c} + \n{d}+ \n{e}$ of
wireless sensor nodes, such that $\n{W}=1-\n{G}$.  
To keep the model simple for the purpose of illustrating the refined mean field
approach, we do not consider interference due to collision in the communication between nodes and gateways.

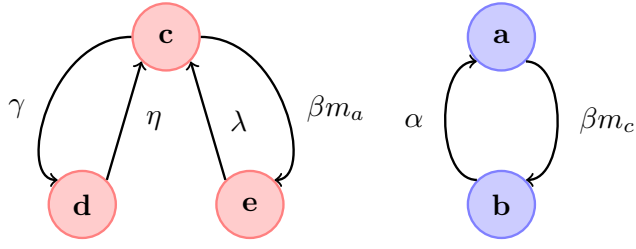
\begin{figure}
\begin{center}
\resizebox{0.7\textwidth}{!}{
\begin{tikzpicture}
\tikzstyle{place}=[circle,draw=blue!50,fill=blue!20,thick,inner sep=0pt,minimum size=8mm]
\node (WSNC) [place,draw=red!50,fill=red!20] at (1,2) {$\mathbf{c}$};
\node (WSND) [place,draw=red!50,fill=red!20] at (0,0) {$\mathbf{d}$};
\node (WSNE) [place,draw=red!50,fill=red!20] at (2,0) {$\mathbf{e}$};

\node (GWA) [place,draw=blue!50,fill=blue!20] at (5,2) {$\mathbf{a}$};s
\node (GWB) [place,draw=blue!50,fill=blue!20] at (5,0) {$\mathbf{b}$};

\draw[->,thick] (WSNE.north west) .. controls +(up:0mm) and +(up:0mm) .. (WSNC.south east) node[pos=0.5, label=right:{{$\lambda$}}]{};
\draw[->,thick] (WSND.north east) .. controls +(up:0mm) and +(up:0mm) .. (WSNC.south west) node[pos=0.5, label=right:{{$\eta$}}]{};

\draw[->,thick] (WSNC.west) .. controls +(left:10mm) and +(left:5mm) .. (WSND.north west) node[pos=0.5, label=left:{{$\gamma$}}]{};
\draw[->,thick] (WSNC.east) .. controls +(right:10mm) and +(right:5mm) .. (WSNE.north east) node[pos=0.5, label=right:{{$\beta m_a$}}]{};

\draw[->,thick] (GWA.south east) .. controls +(right:5mm) and +(right:5mm) .. (GWB.north east) node[pos=0.5, label=right:{{$\beta m_c$}}]{};
\draw[->,thick] (GWB.north west) .. controls +(left:5mm) and +(left:5mm) .. (GWA.south west) node[pos=0.5, label=left:{{$\alpha$}}]{};

\end{tikzpicture}
}
\caption{\label{fig:wsn_model} WSN model of individual objects: Sensor Node (left) and Gateway (right)}
\end{center}
\end{figure}

The probability transition matrix is given below:
\begin{align*}
\vr{K}(\vt{m})
= \left(
    \begin{array}{ccccc}
      1 - \beta \n{c}        & \beta \n{c}      & 0                                        & 0            & 0  \\
      \alpha                           & 1- \alpha                 & 0                                        & 0            & 0\\
      0                                   & 0                             & 1 - \gamma-\beta \n{a} & \gamma & \beta m_a \\
      0                                   & 0                             & \eta                                    & 1-\eta     & 0 \\
      0                                   & 0                             & \lambda                             & 0            &  1 - \lambda \\
    \end{array}
  \right),
\end{align*}
where $\alpha$ denotes the probability of the gateway to get again
available, $\beta$ the probability of data communication between the
gateway and a sensor node, $\lambda$ the probability that a sensor
node is ready to send data, $\gamma$ the probability that a sensor
node performs a time-out and $\eta$ the probability that a delayed
sensor node tries to communicate again.
  
In the example we will use the following values for the above parameters:
$
\alpha=0.09, 
\beta= 0.9, 
\lambda=0.09,
\gamma=0.01, $ and $
\eta=0.01
$,
and let $M\aN(t)$ denote, as usual, the occupancy measure process of the WSN model (leaving $N$ and $t$ implicit for the sake of notation simplicity).
We are interested in the average response time of a sensor node, i.e. the time a sensor node needs to wait to be able to communicate its data to the gateway. This expected response time can be defined as the fraction between the sensor nodes that are already waiting to communicate their data, i.e. the sensor nodes in state $c$ and state $d$, and the new sensor nodes that became ready in the current time step, i.e. $\lambda$ times the nodes in local state $e$:
\begin{align*}
  \mathbb{E}[R]= \mathbb{E}\left[\frac{(M_c + M_d)}{\lambda
  M_e}\right],
\end{align*}
With reference to Theorem~\ref{theo:main}, we define
$h(x_1,x_2,x_3,x_4,x_5)=\frac{(x_3 + x_4)}{\lambda x_5}$.

\subsection{Computation of $A$, $B$ and $\Gamma$ for the WSN model}
In the sequel, we make reference to $\vt{m}=(m_a, m_b, m_c, m_d, m_e) \in \calU^5$.
The Jacobian  of function $\Phi_1$:
$$D(\Phi_1)(\vt{m})
= \left(
  \begin{array}{ccccc}
    1 - \beta \n{c} &  \alpha& -\beta \n{a} &0 & 0 \\
    \beta \n{c} & 1- \alpha & \beta \n{a} & 0  &0\\
    -\beta \n{c} & 0 &1-\gamma-\beta \n{a} & \eta & \lambda \\
    0 & 0 &\gamma & 1-\eta &0 \\
    \beta \n{c} & 0 & \beta \n{a}  & 0 & 1 - \lambda \\
  \end{array}
\right)$$

The Hessian of the function $\Phi_1$ satisfies
$D^2((\Phi_1)_d)(\vt{m})=0$:
\begin{align*}
  D^2((\Phi_1)_a)(\vt{m})=D^2((\Phi_1)_c)(\vt{m})
  &= \left(
  \begin{array}{ccccc}
    0 &  0& -\beta & 0 & 0 \\
    0 & 0 & 0 & 0 & 0\\
    -\beta & 0 & 0 & 0 &0\\
    0 & 0 & 0 & 0 &0\\
    0 & 0 & 0 & 0 &0\\
  \end{array}
  \right)\\
  D^2((\Phi_1)_b)(\vt{m})=D^2((\Phi_1)_e)(\vt{m})
  &= \left(
  \begin{array}{ccccc}
    0 &  0& \beta & 0 & 0 \\
    0 & 0 & 0 & 0 & 0\\
    \beta & 0 & 0 & 0 &0\\
    0 & 0 & 0 & 0 &0\\
    0 & 0 & 0 & 0 &0\\
  \end{array}
  \right)
\end{align*}
The Jacobian of function $h$ is 
$
D(h)(\vt{m}) = (0,0, \frac{1}{\lambda \n{e}},\frac{1}{\lambda \n{e}}, -\frac{\n{c}+\n{d}}{\lambda \n{e}^2} )
$
and its 
Hessian is 
\begin{align*}
D^2(h)(\vt{m})
= \left(
    \begin{array}{ccccc}
      0 &  0& 0 & 0 & 0 \\
      0 & 0 & 0 & 0 & 0\\
      0 & 0 & 0 & 0 &-\frac{1}{\lambda \n{e}^2}\\
      0 & 0 & 0 & 0 &-\frac{1}{\lambda \n{e}^2}\\
      0 & 0 & -\frac{1}{\lambda \n{e}^2} & -\frac{1}{\lambda \n{e}^2} &\frac{2*(\n{c}+\n{d})}{\lambda*\n{e}^3}\\
    \end{array}
  \right).
\end{align*}
  
 The $5 \times 1$ vector $V_t$ and the $5 \times 5$ matrix $W_t$ are computed recursively according to Theorem~\ref{theo:main}, using the new Jacobian and Hessian for function $\Phi$; thus the
refined mean field approximation of the measure of interest is given by 
$\mathbb{E}[h(\vt{M}\aN(t))] \approx h(\Phi_t(\vt{m})) + (D(h) \cdot V_t + \frac{1}{2}*(D^2(h) \cdot W_t))/N $.

\subsection{Results}

In Figure~\ref{fig:wsn_res} various approximations of the expected
response time for the WSN model are shown, for time values $t$ ranging from $0$ to $400$ time units. 
We consider a relatively small system with
15 nodes (10 sensor and 5 gateway nodes). Recall that the function $h$
is defined by :
$h(x_1,x_2,x_3,x_4,x_5)=\frac{(x_3 + x_4)}{\lambda x_5}$.  We compare
five curves :
\begin{enumerate}
\item The blue curve labelled Classic Mean Field (1) (obtained with
  Octave) shows the expected response time when this is approximated
  by defining $\mathbb{E}[R]$ as in~\cite{bortolussi2013bounds}:
  $$
  \mathbb{E}[R] = \frac{(\n{c} + \n{d})}{\lambda \n{e}} = h(m),
  $$
  where $\n{c}$, $\n{d}$ and $\n{e}$ denote the classical mean field
  approximation values for the fractions of sensor nodes being in
  state $c$, $d$ and $e$, respectively.
\item The red curve (2) is the expectation of $h(M)$, computed by
  stochastic simulation:
  \begin{align*}
    \esp{h(M)} = \min(\esp{\frac{(M_{c} + M_{d})}{\lambda M_{e}}},100)
  \end{align*}
  In the case of individual simulation runs it may of course happen that $M_{e}=0$ occasionally. This is why we defined $\esp{h(M)}$ as the minimum between the actual value and 100. The latter is the value one obtains when one but all nodes are waiting and the last node is getting ready for communication too:  $M_{c}+M_{d}=9$ and $M_e=1$, i.e. 9/(0.09*1) = 100. 
\item The orange curve (3) shows the expected response time approximated
  using the refined mean field approximation of
  Theorem~\ref{theo:main} with the function $h$.
\end{enumerate}
For comparison, we also compute two other quantities :
\begin{itemize}
\item[4.] The purple curve (4) shows the response time approximated as
  follows, using the refined mean field approximation for the fraction
  of sensor nodes in each state (i.e. we use Theorem~\ref{theo:main}
  with the identity function and then apply $h$):
  \begin{align*}
    h(\mathit{rmf}) = \frac{(\mathit{rmf\!}_c + \mathit{rmf\!}_d)}{\lambda
    \mathit{rmf\!}_e}
  \end{align*}
  where $\mathit{rmf\!}_c=\mu_c+V_c/N$ denotes the refined mean field
  approximation of the fraction of sensor nodes in state $c$, and
  similarly for $\mathit{rmf\!}_d$ and $\mathit{rmf\!}_e$ .
\item[5.] Finally, the green curve (5) shows the expected response time
  defined as:
  \begin{align*}
    h(\esp{M}) = \frac{(\mathbb{E}[M_c] + \mathbb{E}[M_d])}{\lambda
    \mathbb{E}[M_e]}
  \end{align*}
  where $\mathbb{E}[M_c]$ is the average fraction of sensor nodes in
  state $c$ obtained via the average of 100,000 individual simulation
  runs of the model. $\mathbb{E}[M_d]$ and $\mathbb{E}[M_e]$ are obtained in a
  similar way.
\end{itemize}
In Figure~\ref{fig:wsn_res}(a), we plot these various curves for $15$
nodes in total. We make two observations. First, in this case the
value obtained by simulation (red curve (2)) is almost 50\% larger
than the classic mean field approximation (1) whereas the refined
approximation (3) is much closer. Second, the purple curve (4) is
close to the green curve (5) but quite far away from the red (2) and
orange (3) curves. This shows that when applying
Theorem~\ref{theo:main}, computing a refined model for $\esp{h(M)}$
and for $h(\esp{M})$ might lead to very different results.

Of course, the larger $N$ gets (in an otherwise equal model), the
closer the orange (3) and red (2) curves will get to the blue curve (1),
i.e. the classic mean field approximation, as illustrated in
Figure~\ref{fig:wsn_res}(b). In both cases, all curves collapse
into to a single curve. 

\begin{figure}[ht]
\begin{center}
  \begin{tabular}{@{}c@{}c@{}}
    \includegraphics[width=0.5\textwidth]{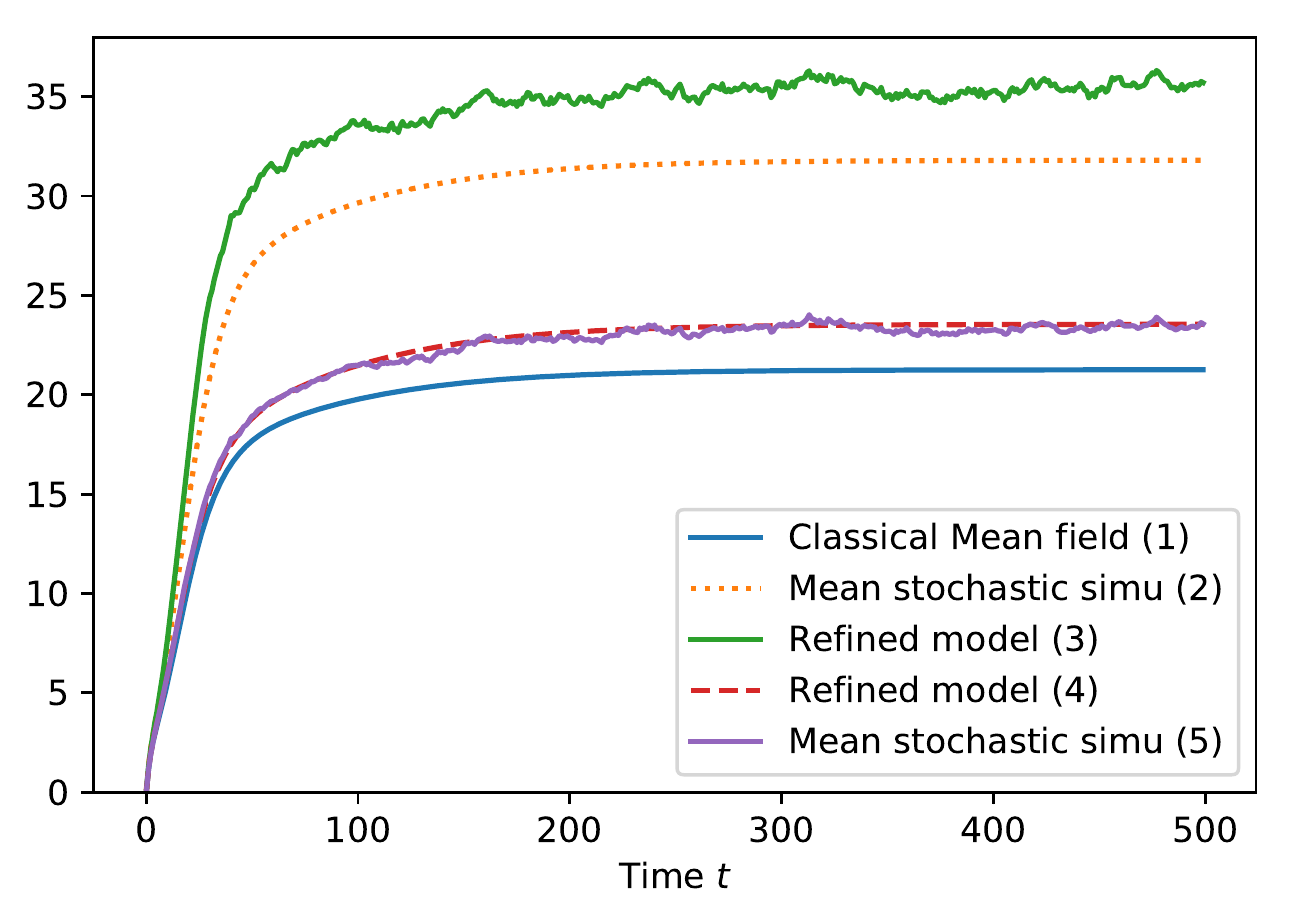}
    &\includegraphics[width=0.5\textwidth]{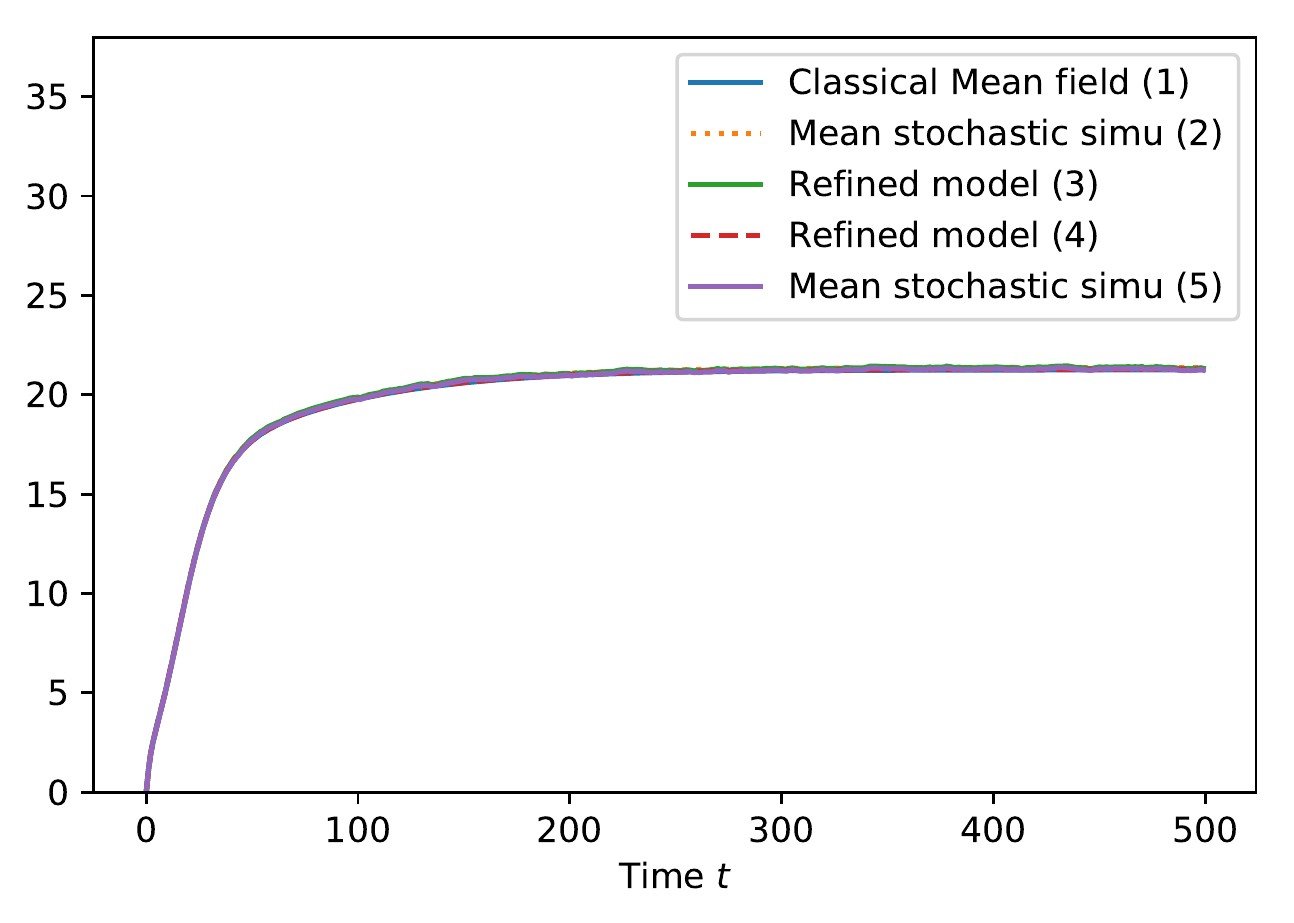}\\
    (a) $N=15$ : $10$ sensors; $5$ gateways
    &(b) $N=1500$ : $1000$ sensors; $500$ gateways
  \end{tabular}
\end{center}

\caption{\label{fig:wsn_res} Expected response time $E[R]$ for a
  sensor node to communicate its data to a GW for a WSN model with $N$
  nodes of which $2N/3$ are sensor nodes and $N/3$ are gateway
  nodes. The red line (2) is an average over 20,000 simulations for
  $N=15$ and $1000$ for $N=1500$.}
\end{figure}

\section{Refined Mean Field Model for Majority Rule Decision-making}
\label{sect:RefVoting}

The example in this section concerns a model for collective decision-making. The model is inspired by the work of Montes de Oca et al.~(see \cite{Mo+11,Sch11} and references therein). Collective decision-making is a process whereby the members of a group decide on a course of action by consensus. Such collective decision-making processes have also been applied in swarm robotics. In particular, in that context the robots where asked to choose between two actions that have the same effect but differ in their execution times~\cite{Mo+11}. 

One strategy of collective decision making is the use of the majority rule. In this strategy the agents in a population are initially divided into two groups. One in which all members have opinion A and one where all members have opinion B. In every step three agents are selected randomly from the total population to form a temporary team. The team applies the majority rule such that all its members adopt the opinion held by the majority (i.e. at least two) of the members, after which they return to the total population until the population has reached a consensus on one of the two opinions. 

In the majority rule strategy extended with differential latency the agents in the population are not all the time available for team formation. Both types of agents are assumed to perform an action with a certain duration during which they cannot participate in team formation. For example, agents with opinion B perform such actions taking (on average) relatively more time than those with opinion A. In~\cite{Mo+11} such latency periods for agents with opinion A and B are modelled by random variables with exponential distributions with rate $\lambda_A$ and $\lambda_B$ respectively.  For simplicity, it can also be assumed that the A-type actions take 1 time unit on average (i.e. $\lambda_a=1$) and that B-type actions take $1/\lambda$ time units on average, where $\lambda$ takes a value in $(0,1]$. This variant of self-organised collective decision-making is known as majority rule with differential latency (MRDL).

In the following we develop a probabilistic, discrete time variant of the MRDL strategy which we call MRDL-DT. In this variant agents can have either opinion A or opinion B, and in both cases they can be either be latent or not, leading to a partition of the population into exactly four classes: $\mathit{LA}$ (latent A), $\mathit{NA}$ (non-latent A), $\mathit{LB}$ (latent B), $\mathit{NB}$ (non-latent B). It is assumed that if an agent is latent it cannot be selected for team formation. The four state MRDL-DT model of an individual object of the population is shown in Figure~\ref{fig:MRDLagent}. The name of the states indicate in which class the object is.

\begin{figure}
\begin{center}
\resizebox{0.5\textwidth}{!}{
\begin{tikzpicture}
\tikzstyle{place}=[circle,draw=blue!50,fill=blue!20,thick,inner sep=0pt,minimum size=8mm]
\node (LB) [place,draw=red!50,fill=red!20] at (0,0) {$\mathbf{LB}$};
\node (NB) [place,draw=blue!50,fill=blue!20] at (4,0) {$\mathbf{NB}$};

\node (LA) [place,draw=red!50,fill=red!20] at (0,2) {$\mathbf{LA}$};s
\node (NA) [place,draw=blue!50,fill=blue!20] at (4,2) {$\mathbf{NA}$};

\draw[->,thick] (LA.north west) .. controls +(left:6mm) and +(left:6mm) .. (LA.south west) node[pos=0.5, label=left:{{}}]{};
\draw[->,thick] (LB.north west) .. controls +(left:6mm) and +(left:6mm) .. (LB.south west) node[pos=0.5, label=left:{{}}]{};
\draw[->,thick] (NA.north east) .. controls +(right:6mm) and +(right:6mm) .. (NA.south east) node[pos=0.5, label=right:{{}}]{};
\draw[->,thick] (NB.north east) .. controls +(right:6mm) and +(right:6mm) .. (NB.south east) node[pos=0.5, label=right:{{}}]{};
\draw[->,thick] (NB.south west) .. controls +(down:5mm) and +(down:5mm) .. (LB.south east) node[pos=0.5, label=below:{{keepB}}]{};
\draw[->,thick] (LB.east) .. controls +(up:0mm) and +(up:0mm) .. (NB.west) node[pos=0.5, label=below:{{ actB}}]{};

\draw[->,thick] (NA.north west) .. controls +(up:5mm) and +(up:5mm) .. (LA.north east) node[pos=0.5, label={{keepA}}]{};
\draw[->,thick] (LA.east) .. controls +(up:0mm) and +(up:0mm) .. (NA.west) node[pos=0.5, label={{ actA}}]{};

\draw[->,thick] (NB.north west) .. controls +(up:0mm) and +(up:0mm) .. (LA.south east) node[pos=0.6, label={{changeBA}}]{};
\draw[->,thick] (NA.south west) .. controls +(up:0mm) and +(up:0mm) .. (LB.north east) node[pos=0.6, label=below:{{changeAB}}]{};

\end{tikzpicture}
}
\caption{\label{fig:MRDLagent} Majority rule differential latency model of an individual object. Latent states are red, non-latent ones blue.}
\end{center}
\end{figure}
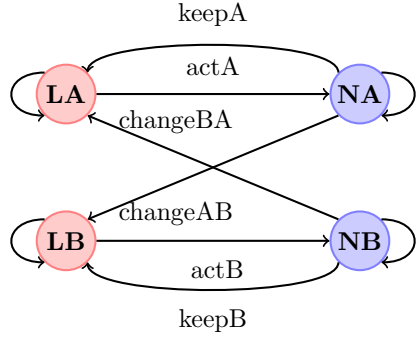

The behaviour of an individual object is as follows. Initially the object is latent, and, assuming it has opinion A (state $\mathit{LA}$), it finishes its job and becomes available for team-formation (transition $\mathit{actA}$) moving to state $\mathit{NA}$ with probability $1/q$, for appropriate $q$. When in $\mathit{NA}$ it gets selected in a team with two other members. If the two other members have opinion B it changes its opinion into B and moves to state $\mathit{LB}$. This can happen with a probability $3\n{NB}^2$ where factor 3 models the fact that we abstract from the exact order in which the members of the team are selected, which can happen in 3 different ways. In alternative the two other members can have both opinion A, or one opinion A and the other opinion B. In that case the opinion of the object does not change and the object moves back to $\mathit{LA}$ with probability $\frac{3}{q}(\n{NA}^2 +\n{NA}\n{NB})$. 

If the object is in state $\mathit{LB}$ it becomes available for team formation, moving to state $\mathit{NB}$ with a probability $\lambda/q$, where $\lambda$ is a value in $(0,1]$. The latter models the relative longer duration of activity B with respect to A. The  behaviour in state $\mathit{NB}$ is similar to that in state $\mathit{NA}$, except that now the opinion may change from B to A.

The discrete time evolution of the model is given by  probability transition matrix $\vr{K}$ in which $\n{LA}$, $\n{NA}$, $\n{LB}$ and $\n{NB}$ denote the fraction of objects in the system that are in local state $\mathit{LA}$, $\mathit{NA}$, $\mathit{LB}$ and $\mathit{NB}$, respectively and $\vt{m}=(\n{LA}, \n{NA},\n{LB},\n{NB})$:
$$
\vr{K}(\vt{m})=
\left(
    \begin{array}{cccc}
      1 - \frac{1}{q}& \frac{1}{q} & 0 & 0  \\
      \frac{3}{q}(\n{NA}^2 +\n{NA}\n{NB}) & \mathit{NA}(\vt{m}) & \frac{3}{q}\n{NB}^2 & 0 \\
      0 & 0 & 1 - \frac{\lambda}{q} & \frac{\lambda}{q} \\
      \frac{3}{q}\n{NA}^2 & 0 & \frac{3}{q}(\n{NB}^2 +\n{NA}\n{NB})& \mathit{NB}(\vt{m}) \\
    \end{array}
  \right)$$  
  where 
  $$\mathit{NA}(\vt{m}) =  1- \frac{3}{q}(\n{NB}^2 + \n{NA}^2 +\n{NA}\n{NB})$$
   and 
   $$\mathit{NB}(\vt{m})=  1 - \frac{3}{q}(\n{NA}^2 + \n{NB}^2 +\n{NA}\n{NB}).$$ 
   Since we are dealing with clock-synchronous discrete systems, we also introduced the discretisation factor $q=10$ so that only a fraction of the population is moving from the latent to the non-latent state at any time.   

In the example we use the  values  $q=10$ and $\lambda$ taking values 1.0, 0.5 and 0.25 in the various analyses, modelling that task B takes the same time as task A, or twice as much time or four times as much time as task A, {\em on average}, respectively. We are interested in the evolution of the consensus, $C_A$, on opinion A as a function of the initial values and the differential latency $\lambda$. Let
 $$
 \mathbb{E}[C_A]=\mathbb{E}[M_{LA}+M_{NA}]=\mathbb{E}[h(M_{LA},M_{NA},M_{LB},M_{NB})]
 $$
 where, with reference to Theorem~\ref{theo:main},
 $h(x_1,x_2,x_3,x_4)=x_1 + x_2$.
 
 \subsection{Computation of $A$, $B$ and $\Gamma$ for the MRDL-DT model} 
As before, we first need to compute the Jacobian and the Hessian of the function $\Phi_1$ for a generic occupancy measure vector $\vt{m}$ at time step t. The Jacobian of function $\Phi_1$ is:
$$D(\Phi_1)(\vt{m})
= \left(
    \begin{array}{cccc}
      1 -1/q &  \frac{9}{q}\n{NA}^2+\frac{12}{q}\n{NA}\n{NB}& 0 & \frac{6}{q}\n{NA}^2 \\
      1/q   & \mathit{JNA}(\vt{m})  & 0 & -\frac{3}{q}\n{NA}^2-\frac{6}{q}\n{NA}\n{NB}  \\
      0 & \frac{6}{q}\n{NB}^2 &1-\frac{\lambda}{q}& \frac{12}{q}\n{NB}\n{NA}+\frac{9}{q}\n{NB}^2  \\
       0 &  -\frac{3}{q}\n{NB}^2-\frac{6}{q}\n{NA}\n{NB}&\frac{\lambda}{q} & \mathit{JNB}(\vt{m})  \\
    \end{array}
  \right)$$
  where 
  $$\mathit{JNA}(\vt{m})= 1-(\frac{9}{q}\n{NA}^2+\frac{6}{q}\n{NA}\n{NB}+\frac{3}{q}\n{NB}^2)$$ and 
  $$\mathit{JNB}(\vt{m}) = 1-(\frac{3}{q}\n{NA}^2+\frac{9}{q}\n{NB}^2+\frac{6}{q}\n{NA}\n{NB}).$$

\noindent The Hessian of  function $\Phi_1$ is:
$$D^2((\Phi_1)_{LA})(\vt{m})
= \left(
    \begin{array}{cccc}
      0 &  0& 0 & 0  \\
      0 & \frac{18}{q}\n{NA}+\frac{12}{q}\n{NB} & 0 & \frac{12}{q}\n{NA} \\
      0 & 0 & 0 & 0 \\
      0 & \frac{12}{q}\n{NA} & 0 & 0 \\
    \end{array}
  \right)$$
  
  $$D^2((\Phi_1)_{NA})(\vt{m})
= \left(
    \begin{array}{cccc}
      0 &  0& 0 & 0  \\
      0 & -\frac{18}{q}\n{NA}-\frac{6}{q}\n{NB} & 0 & -\frac{6}{q}\n{NA}-\frac{6}{q}\n{NB} \\
      0 & 0 & 0 & 0 \\
      0 & -\frac{6}{q}\n{NA}-\frac{6}{q}\n{NB} & 0 & -\frac{6}{q}\n{NA} \\
    \end{array}
  \right)$$
  
  $$D^2((\Phi_1)_{LB})(\vt{m})
= \left(
    \begin{array}{cccc}
      0 &  0& 0 & 0  \\
      0 & 0 & 0 & \frac{12}{q}\n{NB} \\
      0 & 0 & 0 & 0 \\
      0 & \frac{12}{q}\n{NB} & 0 & \frac{18}{q}\n{NB}+\frac{12}{q}\n{NA} \\
    \end{array}
  \right)$$
  
  $$D^2((\Phi_1)_{NB})(\vt{m})
= \left(
    \begin{array}{cccc}
      0 &  0& 0 & 0  \\
      0 & -\frac{6}{q}\n{NB} & 0 & -\frac{6}{q}\n{NA}-\frac{6}{q}\n{NB} \\
      0 & 0 & 0 & 0 \\
      0 & -\frac{6}{q}\n{NA}-\frac{6}{q}\n{NB} & 0 & -\frac{18}{q}\n{NB}-\frac{6}{q}\n{NA} \\
    \end{array}
  \right)$$
  
\subsection{Results for the MRDL-DT example}

We first show some results for a medium size population of $N=160$. Figure~\ref{fig:mrdl_06_160} shows the dynamics of the fractions of the population having opinion A and B, for both the latent and non-latent objects,
for the first $200$ time units. Similarly to the previous examples, a good correspondence can be observed between the results of the mean of 1000 simulation runs and the mean field approximations. Also in this case the refined mean field provides a better approximation than the classical mean field approximation for the period ranging approximately from 50 to 150 time units.

\begin{figure}[ht]
\begin{center}
  \begin{tabular}{cc}
    \includegraphics[width=0.47\textwidth]{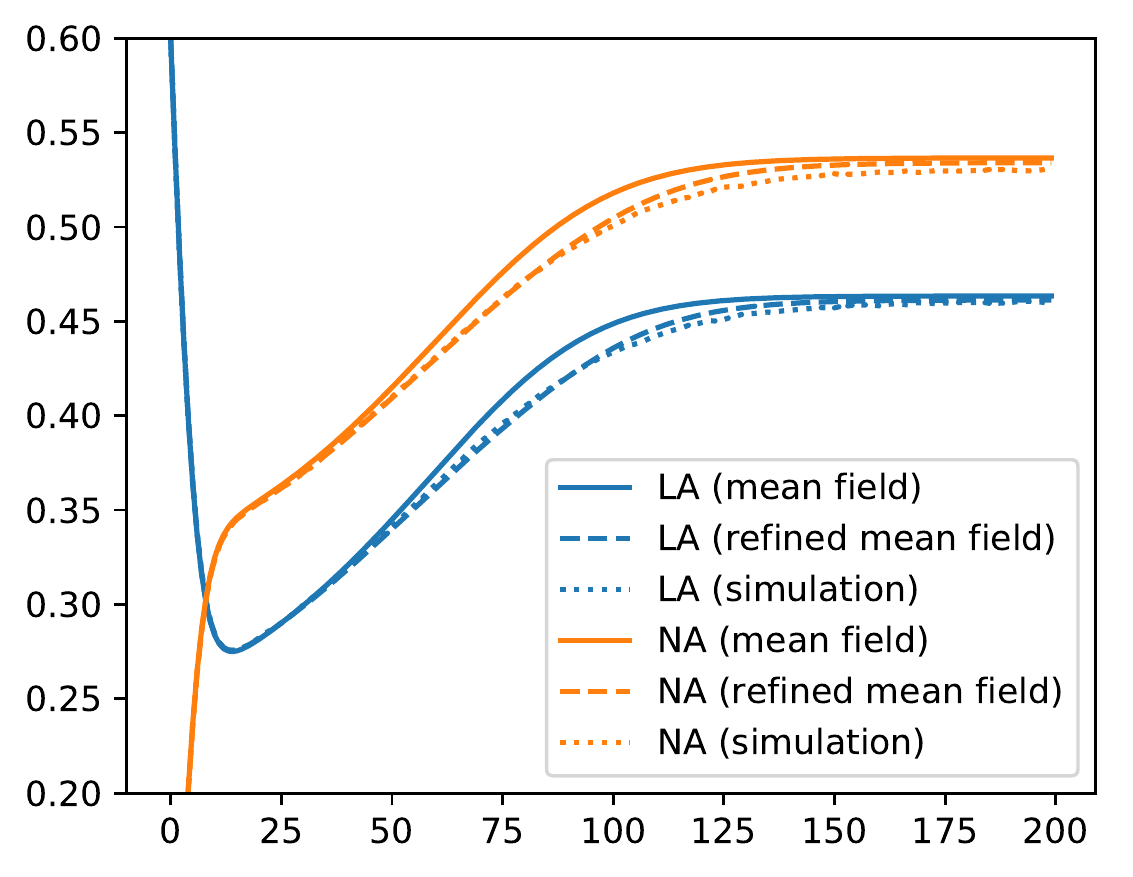}
    &\includegraphics[width=0.47\textwidth]{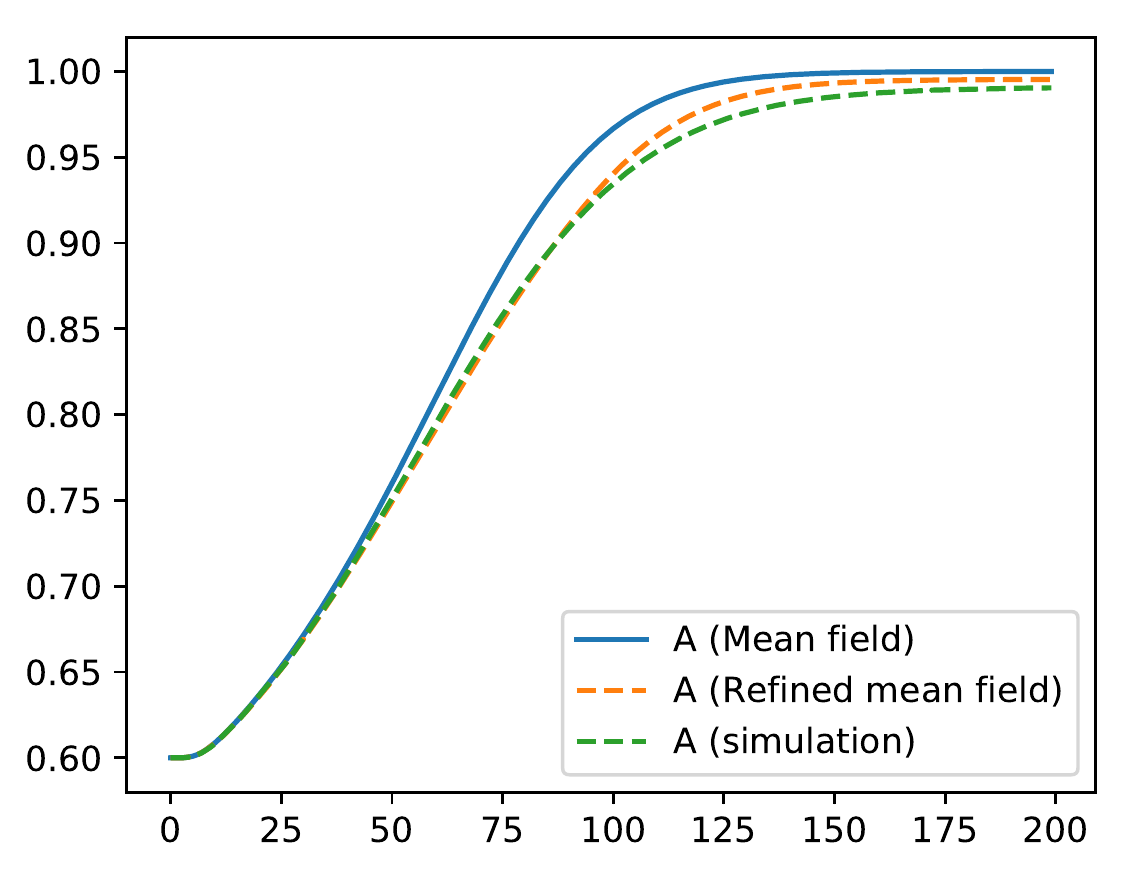}\\
    (a) Latent and non-latent & (b) Dynamics of opinion $A$
  \end{tabular}
\end{center}
\caption{\label{fig:mrdl_06_160} Dynamics of opinion A, latent and
  non-latent. Classical mean field (plain lines), refined mean field
  (dashed lines) and simulation results (dotted lines) of MRDL model
  with 160 objects, $\lambda=1.0$, q=10 and initially $0.6*160=96$
  have opinion A and the population is initially latent. }
\end{figure}

A similar improved correspondence can be observed when considering the
aggregated populations with opinion A or opinion B, respectively, as
shown in Figure~\ref{fig:mrdl_06_160}(b).


\begin{figure}[ht]
  \begin{center}
    \includegraphics[width=0.6\textwidth]{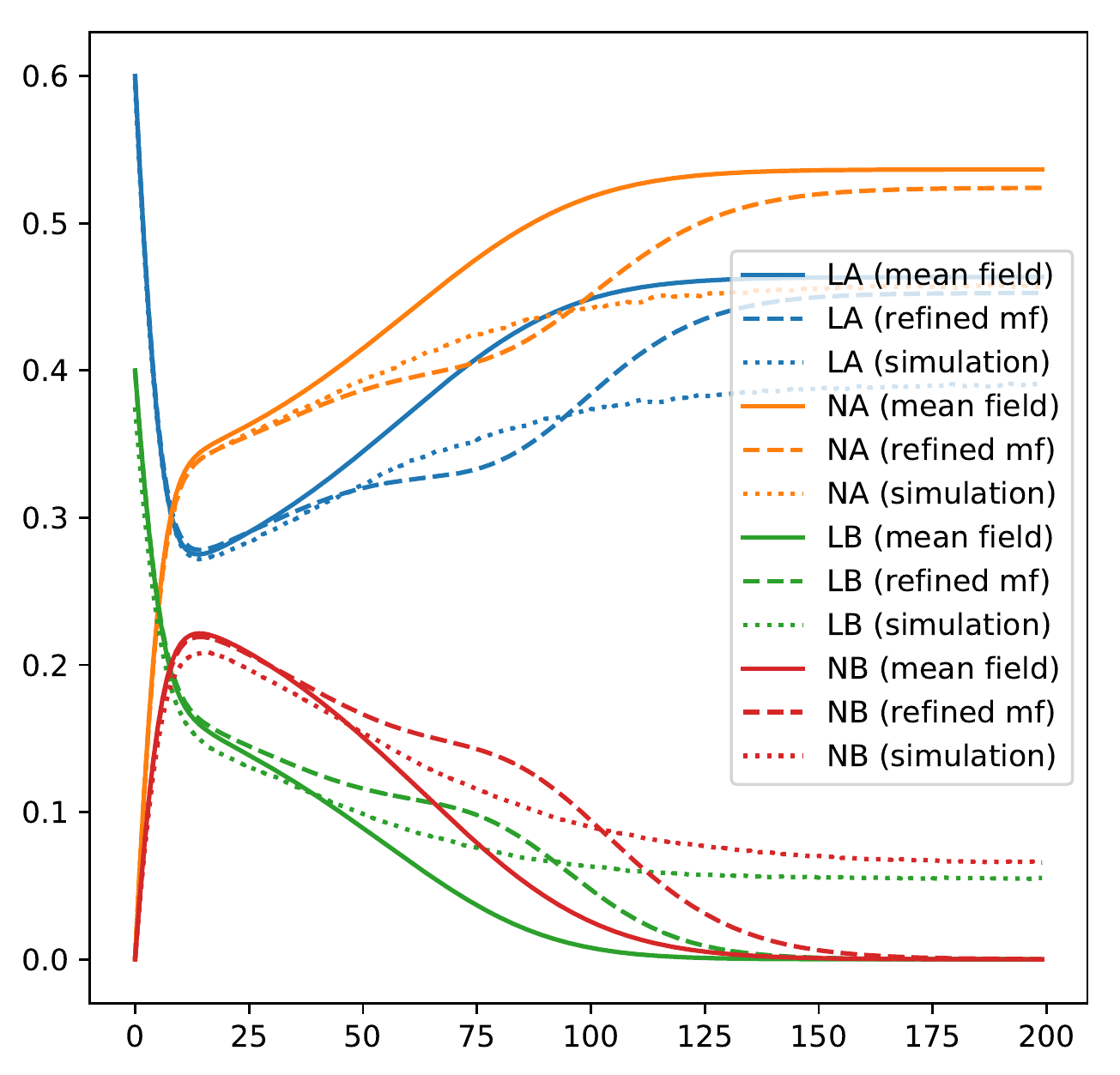}
\end{center}
\caption{\label{fig:mrdl_06_32} Classical mean field (plain lines),
  refined mean field (dashed lines) and simulation results (dotted
  lines) of MRDL model with 32 objects, $\lambda=1.0$, q=10 and
  initially $\floor{0.6*32}=9$ have opinion A and the population is
  initially latent. }
\end{figure}

However, if a much smaller population is considered, e.g. $N=32$, the
mean field approximation differs considerably from the simulation
results and the refined approximation does not really improve the
accuracy of the approximation. This is what can be observed in
Figure~\ref{fig:mrdl_06_32}, where the results for $N=32$ are shown for a
model without differential latency (i.e. $\lambda=1.0$), for $t\in [0,500]$.  This can be
explained as follows~: For a large population, a system that is
initially biased towards one opinion will reach consensus on this
opinion. For a small population, however, a system that is initially
biased towards one opinion still can reach consensus on the other
opinion due to intrinsic stochastic fluctuations. This cannot be
taken into account in a mean field population model.  However, both
analytical models, derived from a master equation approach, and
simulation show that the probability to reach consensus on a given
opinion rapidly converges to a step function for a growing population
size $N$~\cite{Sch11}, where the critical density is given by
$\n{A}=\frac{\lambda}{(1+\lambda)}$, where $\n{A}$ is the initial
fraction of the population with opinion $A$. For large $N$, if
$\n{A}>\frac{\lambda}{(1+\lambda)}$ the system almost surely reaches
consensus on A, whereas for $\n{A}<\frac{\lambda}{(1+\lambda)}$ it
almost surely reaches consensus on B.  This explains why for larger
populations the mean field approximations become increasingly accurate
as shown in Figure~\ref{fig:mrdl_06_160}.

This example shows a limit of the refined mean field approximation:
when a system has multiple equilibrium point (and in particular when
there are multiple absorbing states as in this example), the dynamics
of the mean field approximation depends on the initial state of the
system: For a given initial state the mean field will always follow
the same trajectory (it is a deterministic system). When the system is
large, the random fluctuations will remain small and the corresponding
stochastic system will stay in the same basin of attraction.  In the
case of a small population, however, the dynamics will be greatly
affected by the random fluctuations. These fluctuations can lead the
system to another basin of attraction than the original one.

\section{Non-Exponentially Stable Equilibrium : Accuracy versus Time}
\label{sect:non-exponentially-stable}

In the previous sections, the dynamical system $m=\Phi_1(m)$ has
either one exponentially stable attractor (Section~\ref{sect:RefSEIR})
or multiple exponentially stable attractors
(Section~\ref{sect:RefWSN}).  When the attractor is unique and
exponentially stable, the accuracy of the mean field or refined mean
field approximation is uniform in time
(Theorem~\ref{theo:steady}(ii)).  In this section, we study the case
of a system that has a unique attractor but that is not exponentially
stable.  We show that, in this case, the accuracy of the (refined)
mean field approximation is no longer uniform in time.

We consider a system with $N$ objects in which each object is in state
$0$ or $1$. An object in state $1$ goes to state $0$ with probability
$1$ and an object in state $0$ goes to $1$ with probability
$\alpha m_0$, where $\alpha\in(0,1)$ is a parameter. The transition
matrix $K$ is
\begin{align*}
  K(m) = \left[
  \begin{array}{cc}
    1-\alpha m_0&\alpha m_0\\
    1 & 0
  \end{array}
\right]
\end{align*}
The function $\Phi_1:m\mapsto mK(m)$ has a unique fixed point whose
first component is
$\mu_0(\infty)=(\sqrt{1+4\alpha}-1)/(2\alpha)$. This fixed point is
exponentially stable if and only if $\alpha < 0.75$.

\subsection{Transient regime and accuracy for large $t$}

\begin{figure}[ht]
  \centering
  \begin{tabular}{@{}c@{}c@{}}
    \includegraphics[width=.5\linewidth]{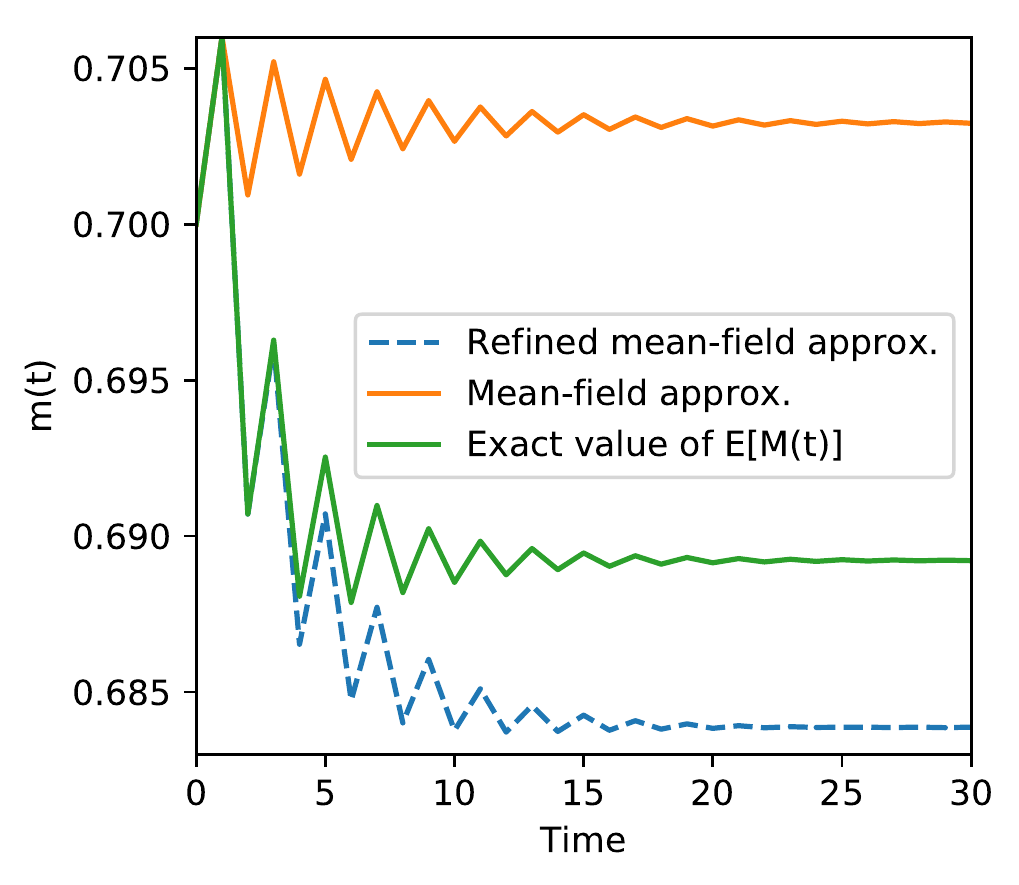}
    &\includegraphics[width=.5\linewidth]{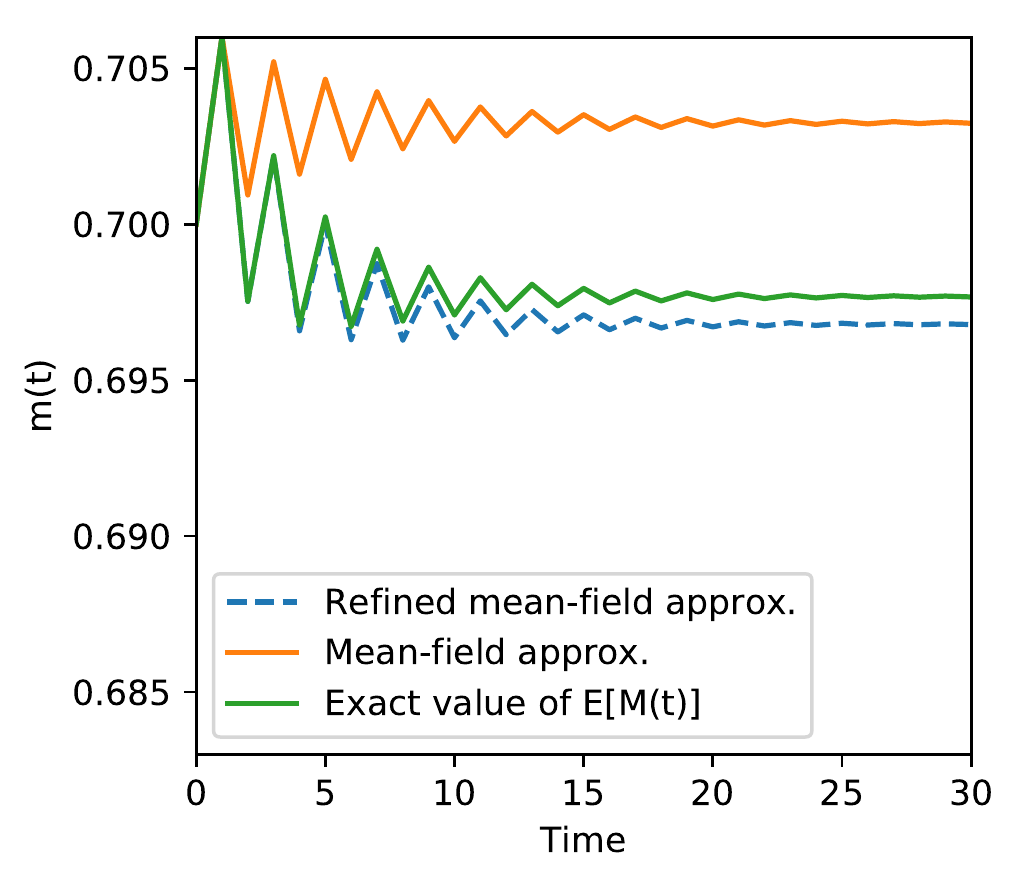}\\
    (a) $N=10$ & (b) $N=30$
  \end{tabular}
  \caption{Exponentially stable case ($\alpha=0.6$).}
  \label{fig:stable}
\end{figure}

\begin{figure}[ht]
  \centering
  \begin{tabular}{@{}c@{}c@{}}
    \includegraphics[width=.5\linewidth]{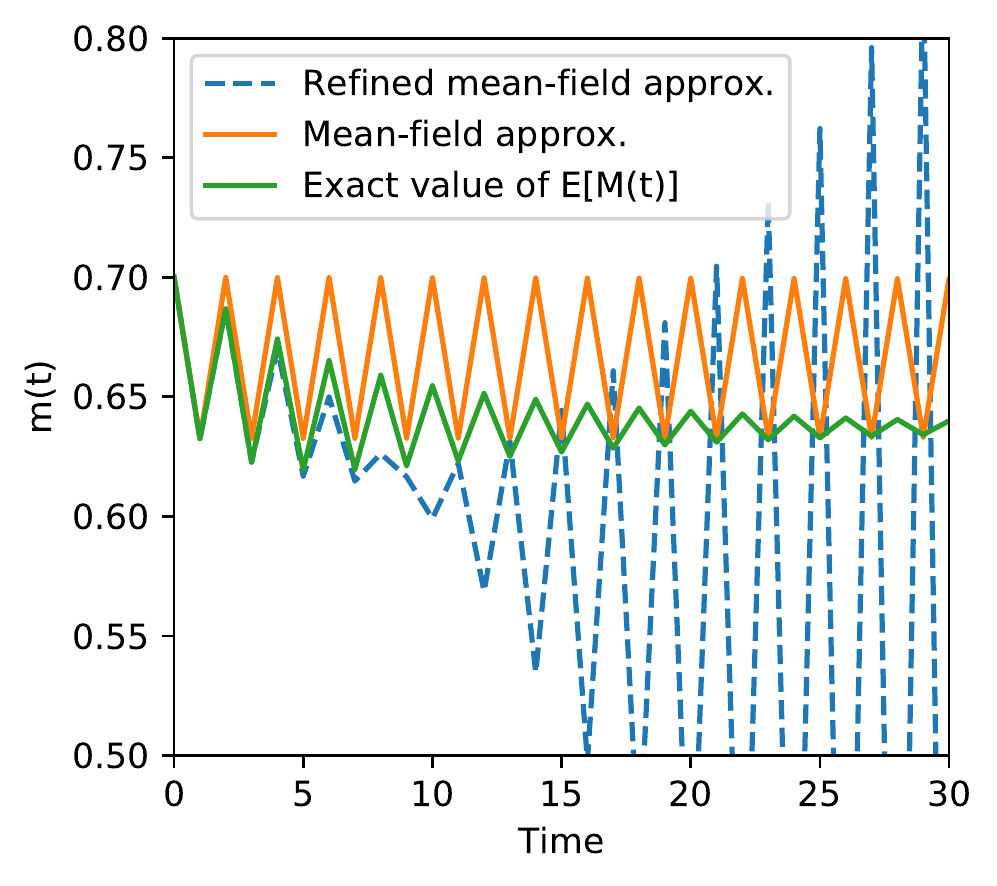}
    &\includegraphics[width=.5\linewidth]{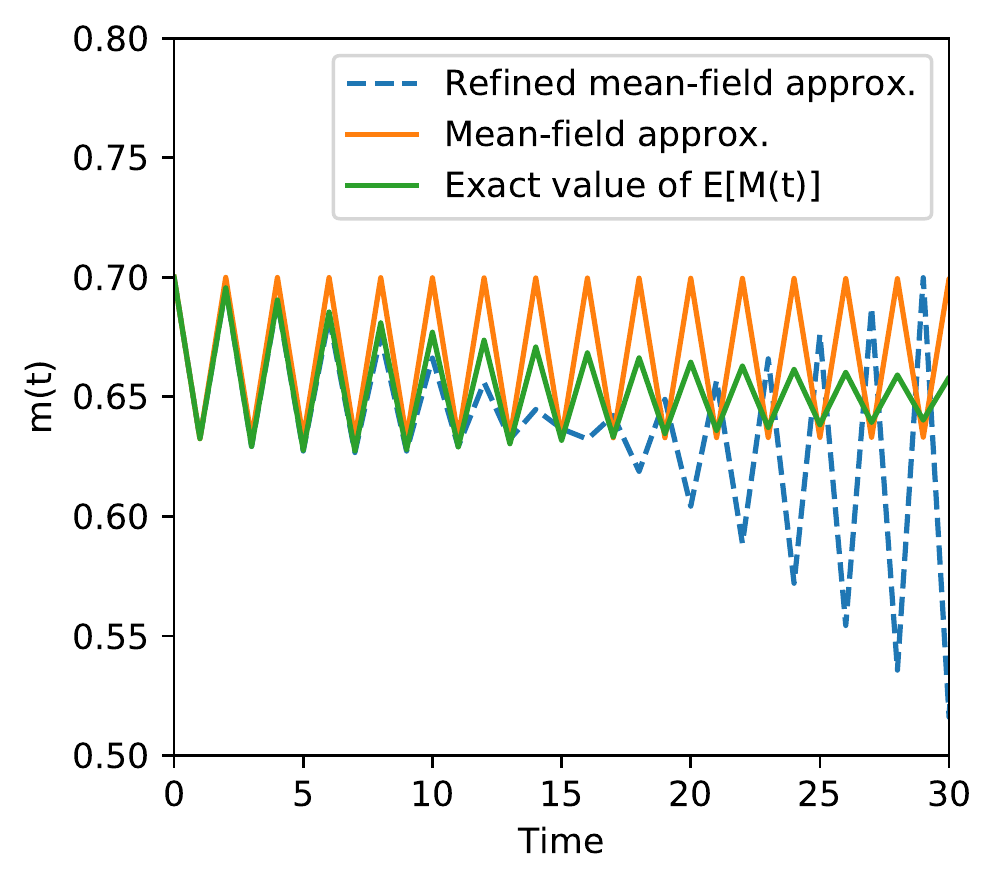}\\[-5pt]
    (a) $N=10$ & (b) $N=30$\vspace{-.3cm}
  \end{tabular}
  \caption{Non-exponentially stable case ($\alpha=0.75$). }
  \label{fig:unstable}
\end{figure}

In Figure~\ref{fig:stable} and Figure~\ref{fig:unstable}, we plot the
first component of the mean field $\mu(t)$ and refined mean field
approximation $\mu(t)+V(t)/N$ as well an exact value of $\esp{M(t)}$
for $N=10$ and $N=30$. The initial value is $m=(0.7,0.3)$. The exact
value of $\esp{M(t)}$ was computed by a numerical method that uses the
fact that the system with $N$ objects can be described by a Markov
chain with $N+1$ states.

These figures show that the refined approximation always improves the
accuracy compared to the classical mean field approximation for small
values of $t$, both for $\alpha=0.6$ and $\alpha=0.75$. The situation
for large values of $t$ is quite different. On the one hand, when the
fixed point is exponentially stable ($\alpha=0.6$,
Figure~\ref{fig:stable}), the refined approximation is very accurate
for all values of $t$. On the other hand, when the fixed point is not
exponentially stable ($\alpha=0.75$, Figure~\ref{fig:unstable}), the
refined approximation seems to be unstable and is not a good
approximation of $\esp{M(t)}$ for values of $t$ that are too large
compared to $N$ ($t>7$ for $N=10$ or $t>12$ for $N=30$).

\subsection{Steady-state convergence}

To explore how the non-exponentially stable case affects the accuracy
of mean field approximation, we now study in more details the
steady-state convergence when $\alpha=0.75$. It is known (see for
example \cite[Corollary~14]{gastgaujalDEDS}) that when the dynamical
system $m=\Phi_1(m)$ has a unique attractor $\mu(\infty)$, then the
steady-state expectation $\esp{\MN}$ converges to $\mu(\infty)$ as $N$
goes to infinity. Theorem~\ref{theo:steady} shows that if in addition
the attractor is exponentially stable then
$\esp{\MN}\approx \mu(\infty)+V/N$.

In Figure~\ref{fig:unstable_steadyState}, we show that the latter no longer
holds when the mean field system has a unique attractor that is {\em not}
exponentially stable. We consider the same model with $\alpha=0.75$
for which $\mu(\infty)=2/3$. We plot in
Figure~\ref{fig:unstable_steadyState}
$\sqrt{N}(\esp{\MN}-\mu(\infty))$, where we computed $\esp{\MN}$ by
inverting the transition matrix of the system of size $N$. This figure
shows that $\sqrt{N}(\esp{\MN}-\mu(\infty))$ does not converge to $0$
as $N$ goes to infinity but seems to converge to approximately
$-0.0975$ (as indicated by the fitted line in orange). This suggests
that for this model, one has in steady-state :
\begin{align}
  \esp{\MN}\approx\mu(\infty)-\frac{0.0975}{\sqrt{N}}+\frac{0.14}{N}.
  \label{eq:sqrt{N}-conv}
\end{align}
Note that the constants $-0.0975$ and $0.14$ were obtained by a purely
numerical method that consist in finding the best curve of the form
$a+b/\sqrt{N}$ that fits $\sqrt{N}(\esp{\MN-\mu(\infty)})$. For now,
we do not known if there exists a systematic way to obtain these
values for another model that would also have a non-exponentially
equilibrium point. We left this question for future work.

We remark that the convergence of $\esp{\MN}$ to $\mu(\infty)$
observed Equation~\eqref{eq:sqrt{N}-conv} is in $O(1/\sqrt{N})$ and
not in $O(1/N)$. This model satisfies all the assumptions of
Theorem~\ref{theo:steady} but one~: The attractor $\mu(\infty)$ is not
exponentially stable. This suggests that having an exponentially
stable attractor is needed to obtain a convergence in $1/N$.  

\begin{figure}[ht]
  \centering
  \includegraphics[width=.8\linewidth]{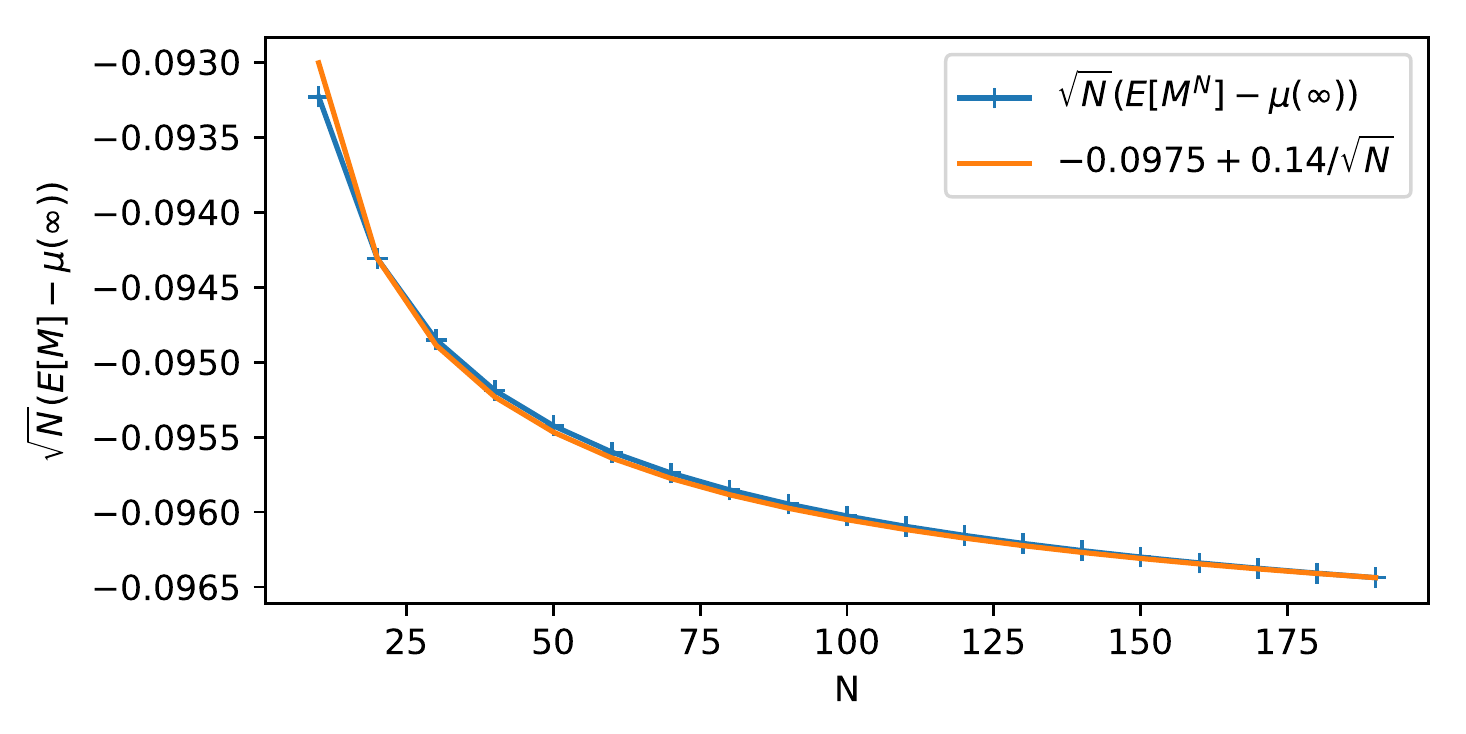}
  
  \caption{Non-exponentially stable case ($\alpha=0.75$) : convergence
    of $\esp{M(t)}$ to $\mu(\infty)$. }
  \label{fig:unstable_steadyState}
\end{figure}

\section{Conclusion}
\label{sec:conclusion}

In this paper we studied population models composed of
(clock-)synchronous objects. A classical method to study such systems
is to consider the mean field approximation. By studying the accuracy
of this deterministic approximation, we developed a new approximation,
that we call the \emph{refined} mean field approximation. We
illustrated on a few examples that this approximation can greatly
improve the accuracy of the classical mean field limit, also for
systems with a relatively small size ($10-20$ objects). Yet, this
refined approximation has some limitations when the deterministic
approximation has multiple basins of attraction or has a unique
attractor that is not exponentially stable.

The proposed refined approximation is given by a set of linear
equations that scales as the square of the dimension of the model (but
does not depend on the system size). For now, we limited our study to
relatively small models, for which the Jacobian and Hessian can be
computed in closed form. We are currently investing means to make this
computation automatic which will allow us to study large-scale
examples. 


\bibliographystyle{plain}
\bibliography{RefMeanField}

\end{document}